\documentclass[a4paper,UKenglish,cleveref, autoref]{lipics-v2019}
%This is a template for producing LIPIcs articles. 
%See lipics-manual.pdf for further information.
%for A4 paper format use option "a4paper", for US-letter use option "letterpaper"
%for british hyphenation rules use option "UKenglish", for american hyphenation rules use option "USenglish"
%for section-numbered lemmas etc., use "numberwithinsect"
%for enabling cleveref support, use "cleveref"
%for enabling cleveref support, use "autoref"

\usepackage{amsmath,amssymb}
\usepackage{amsthm}
\usepackage{xcolor}
\usepackage{paralist}

%Definition macros
\newcommand{\eset}[1]{\{\,#1\,\}}
\newcommand{\lreset}[1]{\left\{\,#1\,\right\}}
\newcommand{\seq}[1]{\langle#1\rangle}

\newcommand{\Nats}{\mathbb{N}}

\newcommand{\A}{\mathcal{A}}
\newcommand{\Bb}{\mathcal{B}}
\newcommand{\Gg}{\mathcal{G}}
\newcommand{\Ss}{\mathcal{S}}
\newcommand{\Ww}{\mathcal{W}}
\newcommand{\Uu}{\mathcal{U}}

%\graphicspath{{./graphics/}}%helpful if your graphic files are in another directory

\bibliographystyle{plain}

\title{Alternating Weak Automata from Universal Trees} 

%\titlerunning{Dummy short title}%optional, please use if title is longer than one line

\author{Laure Daviaud}{City, University of London, UK}{Laure.Daviaud@city.ac.uk}{}{}%TODO mandatory, please use full name; only 1 author per \author macro; first two parameters are mandatory, other parameters can be empty. Please provide at least the name of the affiliation and the country. The full address is optional

\author{Marcin Jurdzi\'nski}{University of Warwick, UK}{Marcin.Jurdzinski@warwick.ac.uk}{}{}

\author{Karoliina Lehtinen}{University of Liverpool, UK}{k.lehtinen@liverpool.ac.uk}{}{}

\authorrunning{L. Daviaud, M. Jurdzi\'nski and K. Lehtinen}%TODO mandatory. First: Use abbreviated first/middle names. Second (only in severe cases): Use first author plus 'et al.'

\Copyright{Laure Daviaud, Marcin Jurdzi\'nski and Karoliina Lehtinen}%TODO mandatory, please use full first names. LIPIcs license is "CC-BY";  http://creativecommons.org/licenses/by/3.0/

\ccsdesc{Theory of computation~Formal languages and automata theory}
\ccsdesc{Theory of computation~Algorithmic game theory}
%TODO mandatory: Please choose ACM 2012 classifications from https://dl.acm.org/ccs/ccs_flat.cfm 

\keywords{alternating automata, weak automata, B\"uchi automata,
  parity automata, parity games, universal trees}%TODO mandatory; please add comma-separated list of keywords

% \category{}%optional, e.g. invited paper

\relatedversion{}%optional, e.g. full version hosted on arXiv, HAL, or other respository/website
%\relatedversion{A full version of the paper is available at \url{...}.}

\supplement{}%optional, e.g. related research data, source code, ... hosted on a repository like zenodo, figshare, GitHub, ...

\funding{%
This work has been supported by the EPSRC grants EP/P020992/1 and
EP/P020909/1 (Solving Parity Games in Theory and Practice). %
}%optional, to capture a funding statement, which applies to all authors. Please enter author specific funding statements as fifth argument of the \author macro.

\acknowledgements{%
We thank Moshe Vardi for encouraging us to bring the state-space
blow-up of alternating parity to alternating weak automata translation in
line with the state-of-the-art complexity of solving parity games.}%optional

\nolinenumbers %uncomment to disable line numbering

\hideLIPIcs  %uncomment to remove references to LIPIcs series (logo, DOI, ...), e.g. when preparing a pre-final version to be uploaded to arXiv or another public repository

%Editor-only macros:: begin (do not touch as author)%%%%%%%%%%%%%%%%%%%%%%%%%%%%%%%%%%
\EventEditors{Wan Fokkink and Rob van Glabbeek}
\EventNoEds{2}
\EventLongTitle{30th International Conference on Concurrency Theory (CONCUR 2019)}
\EventShortTitle{CONCUR 2019}
\EventAcronym{CONCUR}
\EventYear{2019}
\EventDate{August 27--30, 2019}
\EventLocation{Amsterdam, the Netherlands}
\EventLogo{}
\SeriesVolume{140}
\ArticleNo{14}
%%%%%%%%%%%%%%%%%%%%%%%%%%%%%%%%%%%%%%%%%%%%%%%%%%%%%%

\begin{document}

\maketitle

\begin{abstract}
  An improved translation from alternating parity automata
  on infinite words to alternating weak automata is given.  
  The blow-up of the number of states is related to the size of the 
  smallest universal ordered trees and hence it is quasi-polynomial,
  and it is polynomial if the asymptotic number of priorities is at
  most logarithmic in the number of states.
  This is an exponential improvement on the translation of Kupferman
  and Vardi (2001) and a quasi-polynomial improvement on the
  translation of Boker and Lehtinen (2018). 
  Any slightly better such translation would
  (if---like all presently known such translations---it is efficiently
  constructive) 
  lead to algorithms for solving parity games that are asymptotically
  faster in the worst case than the current state of the art
  (Calude, Jain, Khoussainov, Li, and Stephan, 2017;
  Jurdzi\'nski and Lazi\'c, 2017;
  and Fearnley, Jain, Schewe, Stephan, and Wojtczak, 2017),
  and hence it would yield a significant breakthrough. 
\end{abstract}

\section{Introduction}

The influential class of regular languages of infinite words 
(often called the $\omega$-regular languages)
is defined to consist of all the languages of infinite words that are
recognized by finite non-deterministic B\"uchi automata.
The theory of $\omega$-regular languages is quite well understood.  
In particular, it is known that deterministic B\"uchi automata are not 
sufficiently expressive to recognize all the $\omega$-regular
languages, but deterministic automata with the so-called parity
acceptance conditions are, and that the class of $\omega$-regular
languages is closed under complementation.
Effective constructions for determinization and complementation of
B\"uchi automata are important tools both in theory and in
applications, and both are known to require exponential blow-ups of
the numbers of states in the worst case.

For applications in logic, it is natural to enrich automata models by
the ability to alternate between non-deterministic and universal
transitions~\cite{MSS86,KVW00}.
It turns out that alternating parity automata are no more expressive
than non-deterministic B\"uchi automata, and hence neither allowing
alternation, nor the richer parity acceptance conditions, increase
expressiveness;
this testifies to the robustness of the class of $\omega$-regular 
languages. 
On the other hand, alternation increases the expressive power of
automata with the so-called weak acceptance conditions:
non-deterministic weak automata are not expressive enough to recognize
all $\omega$-regular languages, but alternating weak automata are.
The weak acceptance conditions are significant due to their
applications in logic~\cite{MSS86} and thanks to their favourable
algorithmic properties~\cite{KVW00}. 

Given that alternating weak automata are expressive enough to
recognize all the $\omega$-regular languages, a natural question is
whether, and to what degree, alternating weak automata are less
succinct than alternating B\"uchi or alternating parity automata. 
Another way of stating this question is what blow-up in the number of
states is sufficient or required for translations from alternating
parity or alternating B\"uchi automata to alternating weak automata. 
The first upper bound for the blow-up of a translation from
alternating B\"uchi to alternating weak automata was doubly
exponential, obtained by combining a doubly-exponential
determinization construction~\cite{DH94} and a linear translation from 
deterministic parity automata to weak alternating
automata~\cite{MSS86,Lin88}.  
This has been improved considerably by Kupferman and Vardi who have
given a quadratic translation from alternating B\"uchi to alternating
weak automata~\cite{KV01}, and then they have generalized it to a
translation from alternating parity automata with $n$ states and $d$
priorities to alternating weak automata, whose blow-up is
$n^{d+O(1)}$, i.e., exponential in the number of priorities
in the parity acceptance condition~\cite{KV98}. 

Understanding the exact trade-off between the complexity of the
acceptance condition---weak, B\"uchi, or parity, the latter measured
by the number of priorities---and the number of states in an automaton
is interesting from the algorithmic point of view. 
For example, the algorithmic problems of checking emptiness of
non-deterministic parity automata on infinite trees, of model checking
for the modal $\mu$-calculus, of solving two-player parity games, and 
of checking emptiness of alternating parity automata on infinite
words over a one-letter alphabet, are all polynomial-time equivalent.  
Since checking emptiness of alternating weak automata on words over a
one-letter alphabet can be done in linear time, it follows that a
translation from alternating parity automata to alternating weak
automata implies an algorithm for solving parity games whose
complexity matches the blow-up of the number of states in the
translation. 

The first quasi-polynomial translation from alternating parity
automata to alternating weak automata was given recently by Boker and
Lehtinen~\cite{BL18}. 
They have used the register technique, developed by
Lehtinen~\cite{Leh18} for parity games,  
to provide a translation from alternating parity
automata with $n$ states and $d$ priorities to alternating parity
automata with $n^{\Theta(\log(d/\log n))}$ states and $\Theta(\log n)$
priorities; 
combined with the exponential translation of Kupferman and
Vardi~\cite{KV98}, this yields an alternating parity to alternating
weak translation whose blow-up of the number of states  
is~$n^{\Theta(\log n \cdot \log(d/\log n))}$. 

The main result reported in this paper is that another
technique---universal trees~\cite{JL17,CDFJLP19}, also developed to  
elucidate the recent major advance in the complexity of solving parity
games due to Calude, Jain, Khoussainov, Li, and
Stephan~\cite{CJKLS17}---can be used to further reduce the
state-space blow-up in the translation from alternating parity
automata to alternating weak automata.   
We give a translation from alternating parity automata with $n$ states
and $d$ priorities to alternating B\"uchi automata, whose state-space
blow-up is proportional to the size of the 
smallest $(n, d/2)$-universal trees~\cite{CDFJLP19}, which is
polynomial in~$n$ if $d = O(\log n)$ and it is $n^{\lg(d/\lg n)+O(1)}$ if 
$d = \omega(\log n)$. 
When combined with Kupferman and Vardi's quadratic translation of
alternating B\"uchi to alternating weak automata~\cite{KV01}, we get
the composite blow-up of the form $n^{O(\log(d/\log n))}$, down from
Boker and Lehtinen's blow-up 
of~$\left(n^{\Theta(\log(d/\log n))}\right)^{\log n}$.  

% We give a new translation of alternating parity automata on infinite
% words into alternating weak automata.  
% Alternating weak automata have the same expressivity as parity
% automata~\cite{KV01}, that is, they capture all $\omega$-regular
% languages over words, yet enjoy simpler algorithmic properties. 
% They are, however, less concise. 
% 
% Understanding the exact trade-off between the complexity of the
% acceptance condition---parity or weak---and the size of the automaton
% is interesting for several algorithmic problems. For instance,
% deciding the winner in a parity game reduces to checking the emptiness
% of an alternating parity word automaton over the a singleton alphabet;
% this emptiness check is linear for weak automata over the singleton
% alphabet. As a result, a concise translation of alternating parity
% automata into 
% weak automata immediately implies an algorithm for solving parity
% games of which the complexity matches the increase in the size of the
% automaton. This approach  has been implemented by Di Stasio et al.\ in
% their {APT} algorithm~\cite{DAGV16} for solving parity games.  

The necessary size of the state-space blow-up when going from
alternating parity automata to alternating weak automata is wide open: 
the best known lower bound is $\Omega(n \log n)$~\cite{KV01}, closely
related to the $2^{\Omega(n \log n)}$ lower bound on B\"uchi
complementation~\cite{Mic88}, while the best upper bounds are
quasi-polynomial. 
% Boker and Lehtinen's quasi-polynomial translation~\cite{BL18} 
% combines Lehtinen's register technique~\cite{Leh18}, which reduces the
% number of priorities to logarithmic in the state space, with Kupferman
% and Vardi's exponential (in the number of priorities)
% procedure~\cite{KV98}, which eliminates the remaining priorities one
% by one.
% This results in a blow-up of the form
% $n^{\Theta(\lg n \cdot \lg (d/\lg n))}$, where $n$ and $d$ are the 
% number of states and the number of priorities of the original parity
% automaton. 
On the other hand, the blow-up for the parity to weak translation that
we obtain nearly matches the current state-of-the-art quasi-polynomial
upper bounds on the complexity of solving parity
games~\cite{CJKLS17,JL17,FJSSW17}.  
It follows that any significant improvement over our translation would
lead to a breakthrough improvement in the complexity of solving parity
games.

The exponential translation from alternating parity automata to
alternating weak automata due to Kupferman and Vardi~\cite{KV98} is
done by a rather involved induction on the number of priorities. 
For an automaton with $d$ priorities, it goes through a
sequence of $d$ intermediate automata of a generalized type, which
they call parity-weak alternating automata.
In contrast, our construction is significantly more streamlined and
transparent; in particular, it avoids introducing a new class of
hybrid parity-weak automata. 
We first establish a hierarchical decomposition of runs
of alternating parity automata as a generalization of the
decomposition of runs of alternating co-B\"uchi automata due to
Kupferman and Vardi~\cite{KV01}, and then we use the recently
introduced universal trees~\cite{JL17,CDFJLP19} to construct an
alternating B\"uchi automaton, which is a parity automaton with
just $2$ priorities. 
% The blow-up involved in the parity to B\"uchi translation is
% proportional to the size of the smallest universal trees, 
% % with the blow-up proportional to the size of the smallest
% % $(n,d)$-universal trees,
% and the further blow-up involved in the B\"uchi to weak translation is
% only quadratic~\cite{KV01}. 
% % which then only requires a further quadratic blow-up to translate into
% % a weak automaton~\cite{KV01}.
% The composite blow-up is polynomial if $d = O(\log n)$, and it is of
% the form $n^{O(\lg(d/\lg n))}$ if $d = \omega(\log n)$~\cite{JL17}.
% This way we obtain an exponential improvement on the translation of
% Kupferman and Vardi~\cite{KV01} (which has $n^{d+O(1)}$ blow-up)   
% and a quasi-polynomial improvement on the translation of Boker and
% Lehtinen~\cite{BL18} 
% (from $\left(n^{\Theta(\lg (d/\lg n))}\right)^{\lg n}$ 
% down to $n^{O(\lg (d/\lg n))}$).  
Our work is yet another application of the recently introduced notion
of universal trees~\cite{JL17,CDFJLP19}. 
Such applications typically focus on algorithms for solving
games~\cite{JL17,DJL18,CDFJLP19,CF19};
our work is the first whose primary focus is on automata.

In addition to universal trees, we use a notion of lazy progress
measure.
% , which adapts classical parity progress measures, used to
% solve parity games, to infinite arenas. 
Unlike the standard parity progress measures, which can be recognised
by safety automata but require an explicit bound to be known on the
number of successive occurrences of odd priorities, lazy progress
measures are recognised by B\"uchi automata and can deal with finite
but unbounded numbers of occurrences of successive odd priorities.   
This  B\"uchi automaton is similar to (but more concise than) the
automaton used to characterise parity tree automata that recognise
co-B\"uchi recognisable tree languages~\cite{LQ17}, itself a
generalisation of automata used to decide the weak definability of
tree languages given as B\"uchi automata~\cite{SW16,CKLV13}.  
%  Skrzypczak and Walukiewicz, and Colcombet, Kuperberg, L\"oding and
%  Vanden Boom. 
%\karoliina{Argh, the above sentence is clunky and potentially out of
%  place, but I still think it's important to mention where the B\"uchi
%  automaton comes from.} 

A similar concept to our lazy progress measures was already introduced
by Klarlund for complementation of B\"uchi and Streett automata on
words~\cite{Kla91}.
Klarlund indeed proves a result that is equivalent to one of our key
lemmas on parity progress measures. 
% In his paper, he proves a result analogous to what we present in 
% our proof of Theorem~\ref{theorem:fundamental}.
Our proof, however, is more constructive, and it explicitly
provides a hierarchical decomposition, which clearly describes the
structure of accepting run dags of parity word automata. 
Moreover---unlike Klarlund's proof, which relies on the result about 
Rabin measures~\cite{KK95}---our proof is self-contained.
We suspect that the opaqueness of Klarlund's paper~\cite{Kla91} may
have been responsible for attracting less attention and shallower 
absorption by the research community than it deserves.
In particular, some of the techniques and results that he presents
there have been rediscovered and refined by various authors, often
much later~\cite{KV01,KV98,Jur00,JL17,CDFJLP19}, including this 
work.
We hope that our paper will help a wider and more thorough reception
and appreciation of Klarlund's work.

\section{Alternating automata}

For a finite set~$X$, we write $\Bb^+(X)$ for the set of positive
Boolean formulas over~$X$.
We say that a set~$Y \subseteq X$ \emph{satisfies} a formula
$\varphi \in \Bb^+(X)$ if~$\varphi$ evaluates to $\mathtt{true}$ when
all variables in~$Y$ are set to $\mathtt{true}$ and all variables in
$X \setminus Y$ are set to $\mathtt{false}$. 
For example, the sets $\eset{x, y}$ and $\eset{x, z}$ satisfy the
positive Boolean formula $x \wedge (y \vee z)$, but the set
$\eset{y, z}$ does not. 
An \emph{alternating automaton} has a finite set~$Q$ of \emph{states},
an \emph{initial state} $q_0 \in Q$, a finite \emph{alphabet}
$\Sigma$, and a transition function
$\delta : Q \times \Sigma \to \Bb^+(Q)$. 
Alternating automata allow to combine both \emph{non-deterministic}
and \emph{universal} transitions;
disjunctions in transition formulas model the non-deterministic
choices and conjunctions model the universal choices.

We consider alternating automata as acceptors of infinite words.
Whether infinite sequences of states in runs of such automata are
\emph{accepting} or not is determined by an
\emph{acceptance condition}.
Here, we consider \emph{parity}, \emph{B\"uchi}, \emph{co-B\"uchi},
\emph{weak}, and \emph{safety} acceptance conditions.
In a \emph{parity} condition, given by a
\emph{state priority function} $\pi : Q \to \eset{0, 1, 2, \dots, d}$  
for some positive even integer~$d$, an infinite sequence of states is   
accepting if the largest state priority that occurs infinitely many
times is even.
\emph{B\"uchi} conditions are a special case of parity conditions
in which all states have priorities~$1$ or~$2$, and \emph{co-B\"uchi} 
conditions are parity conditions in which all states have
priorities~$0$ or~$1$.  

Let the \emph{transition graph} of an alternating automaton have an
edge $(q, r) \in Q \times Q$ if $r$ occurs in $\delta(q, a)$ for some
letter $a \in \Sigma$.  
We say that a parity automaton has a \emph{weak} acceptance condition 
if it is \emph{stratified}:
in every cycle in the transition graph, all states have the same
priority. 
Weak conditions are a special case of both B\"uchi and co-B\"uchi
conditions in the following sense: 
if the transition graph of a parity automaton is stratified, then
every infinite path in the transition graph satisfies each of the
following three conditions if and only if it satisfies the other two: 
\begin{compactitem}
\item
  the parity condition
  $\pi : Q \to \eset{0, 1, 2, \dots, d}$; 
\item
  the co-B\"uchi condition $\pi' : Q \to \eset{0, 1}$;
\item
  the B\"uchi condition $\pi'' : Q \to \eset{1, 2}$;
\end{compactitem}
where
$\pi'(q) = \pi(q) \bmod 2$ and $\pi''(q) = 2 - \pi'(q)$ for all
$q \in Q$. 

We say that a state is \emph{absorbing} if its only successor in the
transition graph is itself.
A parity automaton has a \emph{safety} acceptance condition if all of
its states have priority~$0$, except for the additional absorbing
$\mathtt{reject}$ state that has priority~$1$.
An automaton with a safety acceptance condition is stratified, and
hence safety conditions are a special case of weak conditions.

Whether an infinite word $w = w_0 w_1 w_2 \cdots \in \Sigma^\omega$ is
\emph{accepted} or \emph{rejected} by an alternating automaton~$\A$ is
determined by the winner of the following \emph{acceptance game}
$\Gg(\A, w)$. 
The set of positions in the game is the set $Q \times \Nats$ 
and the two players, Alice and Elvis, play in the following way.
The \emph{initial position} is~$(q_0, 0)$;
for every \emph{current position} $(q_i, i)$, first Elvis chooses a 
subset $P$ of~$Q$ that satisfies $\delta(q_i, w_i)$, then Alice picks
a state~$q_{i+1} \in P$, and $(q_{i+1}, i+1)$ becomes the next current  
position.
Note that Elvis can be thought of making the non-deterministic choices 
and Alice can be thought of making the universal choices in the
transition function of the alternating automaton. 
This interaction of Alice and Elvis yields an infinite sequence of
states $q_0, q_1, q_2, \dots$, and whether Elvis is declared the
winner or not is determined by whether the sequence is
\emph{accepting} according to the acceptance condition of the
automaton.  
Acceptance games for parity, B\"uchi, co-B\"uchi, and weak conditions
are parity games, which are \emph{determined}~\cite{EJ91}:
in every acceptance game, either Alice or Elvis has a winning
strategy. 
We say that an infinite word $w \in \Sigma^\omega$ is \emph{accepted}
by an alternating automaton~$\A$ if Elvis has a winning strategy in
the acceptance game~$\Gg(\A, w)$, and otherwise it is
\emph{rejected}. 

A \emph{run dag} of an alternating automaton~$\A$ on an infinite
word~$w$ is a directed acyclic graph $G = (V, E, \rho : V \to Q)$,
where $V \subseteq Q \times \Nats$ is the set of vertices; 
successors (according to the directed edge relation~$E$) of every
vertex $(q, i)$ are of the form $(q', i+1)$; 
the following conditions hold:
\begin{compactitem}
\item
  $(q_0, 0) \in V$,
\item
  for every $(q, i) \in V$, the Boolean formula $\delta(q, w_i)$ is
  satisfied by the set of states~$p$, such that $(p, i+1)$ is a
  successor of $(q, i)$;  
\end{compactitem}
and $\rho$ projects vertices
% (that, recall, are pairs in $Q \times \Nats$)
onto the first component. 
Note that every vertex in a run dag has a successor, and hence every
maximal path is infinite.  
We say that a run dag of an automaton~$\A$ is \emph{accepting} if
the sequence of states on every infinite path in the run dag is 
accepting according to the accepting condition of~$\A$.  
The \emph{positional} determinacy theorem for parity
games~\cite{EJ91} implies that an infinite word~$w$ is
accepted by an alternating automaton~$\A$ with a parity
(or B\"uchi, co-B\"uchi, weak, or safety) condition if and only if
there is an accepting run dag of~$\A$ on~$w$.
In other words, run dags are compact representations of (positional)
winning strategies for Elvis in the acceptance games. 

Run dags considered here are a special case of \emph{layered dags},
whose vertices can be partitioned into sets $L_0, L_1, L_2, \dots$,
such that every edge goes from some layer~$L_i$ to the next
layer~$L_{i+1}$.
We define the \emph{width} of a layered dag with an infinite number of
layers~$L_0, L_1, L_2, \dots$ to be $\liminf_{i \to \infty} |L_i|$.
Note that the width of a run dag of an alternating automaton is
trivially upper-bounded by the number of states of the automaton.

\section{From co-B\"uchi and B\"uchi to weak}
\label{section:from-co-buchi-to-weak}

In this section we summarize the results of Kupferman and
Vardi~\cite{KV01} who have given translations from alternating
co-B\"uchi and B\"uchi automata to alternating weak automata with only
a quadratic blow-up in the state space.
% This serves a dual purpose. 
% Firstly, recalling the decomposition of co-B\"uchi run dags of
We recall the decomposition of co-B\"uchi accepting run dags of 
Kupferman and Vardi in detail because it motivates and prepares the
reader for our generalization of their result to accepting parity run
dags.  
% Secondly, our main technical result is a translation from alternating
Our main technical result is a translation from alternating
parity automata to alternating B\"uchi automata with only a
quasi-polynomial blow-up in the state space, but the ultimate goal is
a quasi-polynomial translation from parity to weak automata.
Therefore, we also recall how Kupferman and Vardi use their quadratic
co-B\"uchi to weak translation in order to obtain a quadratic B\"uchi
to weak translation. 
% Therefore, in order to make the presentation of the composite result
% self-contained, we present a quadratic B\"uchi to weak translation
% that is equivalent to the Kupferman and Vardi's one~\cite{KV01}, but
% our presentation is more direct because it avoids a double
% dualization.  

\subsection{Co-B\"uchi progress measures}

The main technical concept that underlies Kupferman and
Vardi's~\cite{KV01} translation from alternating co-B\"uchi automata
to alternating weak automata is that of a \emph{ranking function} for
accepting run dags of alternating co-B\"uchi automata.
As Kupferman and Vardi themselves point out, ranking functions can be
seen as equivalent to Klarlund's 
\emph{progress measures}~\cite{Kla91}. 
We will adopt Klarlund's terminology because the theory of progress
measures for certifying parity conditions is very well
developed~\cite{EJ91,Kla91,KK95,Jur00,JL17,CDFJLP19} and our main goal
in this paper is to use a version of parity progress measures to give
a simplified, streamlined, and improved translation from alternating
parity to alternating weak automata. 

Let $G = (V, E, \pi : V \to \eset{0, 1})$ be a layered dag with vertex
priorities~$0$ or~$1$, and in which every vertex has a successor.
% , of width at most~$n$,
% and in which every vertex has a successor (and hence all maximal
% paths are infinite); 
Note that all run dags of an alternating co-B\"uchi automaton are such
layered dags and if the automaton has~$n$ states then the width of the
run dag is at most~$n$.
(Observe, however, that while formally the third component
$\rho : V \to Q$ in a run dag maps vertices to states, here we instead
consider the labeling $\pi : V \to \eset{0, 1}$ that labels vertices
by the priorities of the states $\pi(v) = \pi(\rho(v))$.) 

A \emph{co-B\"uchi progress measure}~\cite{EJ91,KK95,Jur00} is a
mapping $\mu : V \to M$, where $(M, \leq)$ is a well-ordered set, such
that 
% ---which certifies that all infinite paths in a graph satisfy a
% co-B\"uchi condition---has the following simple definition: 
for every edge $(v, u) \in E$, we have
\begin{compactenum}
\item
  if $\pi(v) = 0$ then $\mu(v) \geq \mu(u)$, 
\item
  if $\pi(v) = 1$ then $\mu(v) > \mu(u)$. 
\end{compactenum}
It is elementary to argue that existence of a co-B\"uchi progress
measure on a graph is sufficient for every infinite path in the graph
satisfying the co-B\"uchi condition.  
Importantly, it is also necessary, which can be, for example, deduced
from the proof of positional determinacy for parity games due to
Emerson and Jutla~\cite{EJ91}.   
In other words, co-B\"uchi progress measures are witnesses for the
property that all infinite paths in a graph satisfy the co-B\"uchi
condition. 
The appeal of such witnesses stems from the property that while
certifying a global and infinitary condition, it suffices to verify
them locally by checking a simple inequality between the labels of the
source and the target of each edge in the graph. 

The disadvantage of progress measures as above is that on graphs of 
infinite size, such as run dags, the well-ordered sets of labels that
are needed to certify co-B\"uchi conditions may be of unbounded 
(and possibly infinite) size.
In order to overcome this disadvantage, and to enable
automata-theoretic uses of progress measure certificates, Klarlund has
proposed the following concept of lazy progress
measures~\cite{Kla91}. 
A \emph{lazy (co-B\"uchi) progress measure} is a mapping 
$\mu : V \to M$, where $(M, \leq)$ is a well-ordered set and 
$L \subset M$ is the set of \emph{lazy-progress} elements, and such
that:
\begin{compactenum}
% \begin{compactitem}
\item
\label{item:non-inc}
  for every edge $(v, u) \in E$, we have $\mu(v) \geq \mu(u)$;
\item
\label{item:pri-1-lazy}
  if $\pi(v) = 1$ then $\mu(v) \in L$;
\item
\label{item:lazy-not-forever}
  on every infinite path in~$G$, there are infinitely many
  vertices~$v$ such that $\mu(v) \not\in L$. 
\end{compactenum}
% \end{compactitem}
It is elementary to prove the following proposition. 

\begin{proposition}
  If a graph 
%   If a layered dag 
  has a lazy co-B\"uchi progress measure
  then all infinite paths in it satisfy the co-B\"uchi condition.  
%   then it is an accepting run dag.  
\end{proposition}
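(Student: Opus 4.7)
The plan is to fix an arbitrary infinite path $v_0, v_1, v_2, \ldots$ in~$G$ and show that only finitely many of its vertices have priority~$1$, which is precisely the co-B\"uchi condition. I will use the three defining properties of a lazy co-B\"uchi progress measure in turn, exploiting the well-orderedness of~$M$.

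First I would apply property~(\ref{item:non-inc}): along the path, the sequence $\mu(v_0) \geq \mu(v_1) \geq \mu(v_2) \geq \cdots$ is non-increasing in the well-ordered set~$(M, \leq)$. Since a well-ordered set contains no infinite strictly decreasing sequence, this sequence must eventually stabilize; that is, there exists $N \in \Nats$ such that $\mu(v_i) = \mu(v_N)$ for all $i \geq N$. Call this common value~$m$.

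Next I would use property~(\ref{item:lazy-not-forever}) to pin down~$m$: because infinitely many vertices~$v_i$ on the path satisfy $\mu(v_i) \notin L$, at least one such index~$i$ satisfies $i \geq N$, and for this~$i$ we have $m = \mu(v_i) \notin L$. Consequently $\mu(v_j) = m \notin L$ for every $j \geq N$. Finally, property~(\ref{item:pri-1-lazy}) gives the contrapositive that $\mu(v_j) \notin L$ implies $\pi(v_j) \neq 1$, so $\pi(v_j) = 0$ for all $j \geq N$, and the path satisfies the co-B\"uchi condition.

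I do not anticipate any real obstacle here: the proof is a short three-step argument and the statement is explicitly labelled elementary. The only conceptual point worth flagging is that the entire argument relies on $(M, \leq)$ being well-ordered rather than merely partially ordered, since otherwise the non-increasing chain in the first step need not stabilize; this is also what justifies the authors' insistence on a well-ordered codomain in the definition.
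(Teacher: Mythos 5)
Your proof is correct; the paper omits a proof of this proposition (calling it elementary), and your three-step argument---stabilization of the non-increasing $\mu$-sequence by well-ordering, pinning the limit value outside $L$ via condition~(3), and then excluding priority~$1$ via condition~(2)---is exactly the intended argument, and mirrors the contradiction-based proof the paper does give for the analogous lazy \emph{parity} progress measure lemma. No gaps.
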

% \begin{proposition}
%   If a run dag of an alternating co-B\"uchi automaton has a lazy
%   co-B\"uchi progress measure then it is an accepting run dag. 
% \end{proposition}
The following converse establishes the attractiveness of lazy
co-B\"uchi progress measures for certifying the co-B\"uchi conditions
on layered dags of bounded width, and hence for certifying accepting
run dags of alternating co-B\"uchi automata. 
% In contrast, the following lemma shows that in order to certify
% co-B\"uchi conditions on a layered dag, lazy progress measures only
% need sets of labels of the size twice the width of the dag. 
% This provides a key insight behind the proof of
% Theorem~\ref{thm:co-Buchi-quadratic}.  

\begin{lemma}[Klarlund~\cite{Kla91}]
  \label{lemma:small-coBuchi-lazy-pm}
  If all infinite paths in a layered dag
  $(V, E, \pi : V \to \eset{0, 1})$ satisfy the co-B\"uchi 
  condition and the width of the dag is at most~$n$, 
% a layered dag of width at most~$n$ 
%   If a run dag $G = (V, E, \rho)$ of an alternating co-B\"uchi
%   automaton with $n$ states is accepting,
  then there is a lazy co-B\"uchi progress measure
  $\mu : V \to M$, where $M = \eset{1, 2, \dots,2n}$ and 
  $L = \eset{2, 4, 6, \dots, 2n}$. 
\end{lemma}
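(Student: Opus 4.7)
The plan is to construct the lazy progress measure by a peeling argument on the layered dag, adapting the classical Kupferman--Vardi ranking. Starting from $H_0 = G$, I inductively define a descending chain $H_0 \supseteq H_1 \supseteq \cdots$ by alternating two removal phases at each round $k \geq 1$. In the \emph{odd phase}, let $S_k$ be the set of priority-$0$ vertices of $H_{k-1}$ with no priority-$1$ descendants in $H_{k-1}$ (the ``completely safe'' vertices); remove them and assign them rank $2k-1$. In the \emph{even phase}, let $F_k$ be the set of vertices in $H_{k-1} \setminus S_k$ with only finitely many descendants in $H_{k-1} \setminus S_k$ (the ``dying'' vertices); remove them and assign them rank $2k$. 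Set $H_k = H_{k-1} \setminus (S_k \cup F_k)$ and let $\mu(v)$ be the rank assigned at the round at which $v$ is removed.

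The three axioms of a lazy progress measure are then verified. Monotonicity (axiom~\ref{item:non-inc}) holds because both $S_k$ and $F_k$ are closed under taking descendants in the relevant sub-dag, so a successor of a vertex removed in round $k$ is removed no later than round $k$, giving $\mu(v) \geq \mu(u)$ along any edge $(v, u)$. Axiom~\ref{item:pri-1-lazy} is immediate since $S_k$ consists only of priority-$0$ vertices; hence priority-$1$ vertices can only be removed in even phases and receive even rank. For axiom~\ref{item:lazy-not-forever}, the rank along an infinite path is non-increasing in a finite set and thus eventually constant; this constant cannot be an even $2k$, since then an infinite tail of the path would lie in $F_k$, contradicting the fact that every vertex of $F_k$ has only finitely many descendants in $H_{k-1} \setminus S_k$.

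The main obstacle is showing that the peeling terminates in at most $n$ double-rounds, so that $\mu$ takes values in $\set{1, 2, \dots, 2n}$. I would maintain two invariants: first, every vertex of $H_k$ has a successor in $H_k$, which follows from the definition of $F_k$ together with K\"onig's lemma applied to the width-bounded sub-dag; second, $\liminf_i |L_i \cap H_k|$ strictly decreases at each non-trivial round. The latter is the delicate step: using the first invariant and K\"onig, any non-empty $H_k$ carries an infinite path whose co-B\"uchi-guaranteed priority-$0$ tail must either contribute tail vertices to $S_{k+1}$ or have priority-$1$ descendants that get trapped in $F_{k+1}$, so at least one vertex from cofinally many layers is peeled off each round. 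Since $\liminf_i |L_i \cap G| \leq n$, after at most $n$ double-rounds $H_n = \emptyset$, completing the construction with $L = \set{2, 4, \dots, 2n}$ as stated.
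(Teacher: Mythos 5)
Your construction is the same Kupferman--Vardi peeling that the paper itself uses: your $S_k$ are its $1$-safe vertices, your $F_k$ are its transient vertices $R_k$, the rank assignment $2k-1$ versus $2k$ is identical, and your verification of the three axioms of a lazy co-B\"uchi progress measure is correct.

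The gap is in the step you yourself flag as delicate: the strict decrease of the width at each round. Your justification --- that the priority-$0$ tail of an infinite path in $H_k$ either contributes vertices to $S_{k+1}$ or has priority-$1$ descendants that ``get trapped in $F_{k+1}$'' --- does not hold up. A priority-$0$ tail vertex that is not $1$-safe has some priority-$1$ descendant in $H_k$, but that descendant has no reason to be transient in $H_k \setminus S_{k+1}$, so it need not land in $F_{k+1}$; and even if it did, a set meeting only finitely many layers cannot lower $\liminf_i |L_i \cap H_k|$. What you actually need, and what the paper proves, is that $S_{k+1}$ itself contains an infinite path. This follows in two steps: (i) $S_{k+1} \neq \emptyset$, because if every vertex of the nonempty $H_k$ had a priority-$1$ descendant in $H_k$, you could chain such descendants into an infinite path of $G$ visiting priority $1$ infinitely often, contradicting the co-B\"uchi hypothesis; (ii) $1$-safety is closed under taking successors and every vertex of $H_k$ has a successor in $H_k$, so any vertex of $S_{k+1}$ starts an infinite path that stays inside $S_{k+1}$. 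Removing a set containing an infinite path deletes at least one vertex from every sufficiently late layer, so the width drops by at least one per round and $k \leq n$ follows. With this repair your argument coincides with the paper's.
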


\begin{proof}
  We summarize a proof given by Kupferman and Vardi~\cite{KV01} that
  provides an explicit decomposition of the accepting run dag into
  (at most) $2n$ parts from which a lazy co-B\"uchi progress measure 
  can be straightforwadly defined.
  The proof by Klarlund~\cite{Kla91} is more succinct, but the former
  is more constructive and hence more transparent.

  Observe that if all infinite paths satisfy the co-B\"uchi condition 
  then there must be a vertex~$v$ whose all descendants 
  (i.e., vertices to which there is a---possibly empty---path 
  from~$v$)
  have priority~$0$; 
  call such vertices \emph{$1$-safe} in~$G_1 = G$. 
  Indeed, otherwise it would be easy to construct an infinite path
  with infinitely many occurrences of
  vertices of priority~$1$. 

  Let $S_1$ be the set of all the $1$-safe vertices in $G_1$,
  and let $G'_1 = G_1 \setminus S_1$ be the layered dag obtained
  from~$G_1$ by removing all vertices in~$S_1$.
  Note that there is an infinite path in the subgraph of~$G_1$
  induced by~$S_1$, and hence the width of~$G'_1$ is strictly smaller 
  than the width of~$G_1$.
  
  Let $R_1$ be the set of all vertices in~$G'_1$ that have only
  finitely many descendants;
  call such vertices \emph{transient} in~$G'_1$. 
  Let $G_2$ be the the layered dag obtained from~$G'_1$ by removing
  all vertices in~$R_1$.
  Since $G_2$ is a subgraph of~$G'_1$, the width of~$G_2$ is strictly
  smaller than the width of~$G_1$.
  Moreover, $G_2$ shares the key properties with~$G_1$:
  every vertex has a successor and hence all the maximal paths are
  infinite, 
  and all infinite paths satisfy the co-B\"uchi condition. 

  By applying the same procedure to $G_2$ that we have described
  for~$G_1$ above, we obtain the set $S_2$ of $1$-safe vertices
  in~$G_2$ and the set $R_2$ of vertices transient in~$G'_2$, and the
  layered dag~$G_3$---obtained from~$G_2$ by removing all vertices 
  in~$S_2 \cup R_2$---has the width that is strictly smaller than that 
  of~$G_2$.
  We can continue in this fashion until the graph~$G_{k+1}$, for some 
  $k \geq 1$, is empty. 
  Since the width of~$G$ is at most~$n$, and the widths of graphs
  $G_1, G_2, \dots, G_{k+1}$ are strictly decreasing, it follows that 
  $k \leq n$. 

  We define $\mu : V \to \eset{1, 2, \dots, 2n}$ by:
  \[
  \mu(v) =
  \begin{cases}
    2i-1 & \text{ if $v \in S_i$},
    \\
    2i & \text{ if $v \in R_i$},
  \end{cases}
  \]
  and note that it is routine to verify that if we let
  $L = \eset{2, 4, \dots, 2n}$ be the set of lazy-progress elements 
  then $\mu$ is a lazy co-B\"uchi progress measure. 
\end{proof}

\subsection{From co-B\"uchi and B\"uchi to weak}

In this section we present a proof of the following result. 

\begin{theorem}[Kupferman and Vardi~\cite{KV01}]
\label{thm:co-Buchi-quadratic}
  There is a translation that given an alternating co-B\"uchi
  automaton with $n$ states yields an equivalent alternating weak
  automaton with $O(n^2)$ states. 
%   For every alternating co-B\"uchi automaton with $n$ states,
%   there is an equivalent alternating weak automaton of size $O(n^2)$. 
\end{theorem}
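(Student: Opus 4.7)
The plan is to compile the lazy co-B\"uchi progress measure supplied by Lemma~\ref{lemma:small-coBuchi-lazy-pm} directly into the state space of a weak automaton. Given an alternating co-B\"uchi automaton $\A = (Q, q_0, \Sigma, \delta, \pi)$ with $|Q| = n$, I would define $\A'$ to have state space $Q \times \eset{1, 2, \dots, 2n}$ plus an absorbing $\mathtt{reject}$ state, initial state $(q_0, 2n)$, and transitions obtained from $\delta$ by letting Elvis guess, at each step, the progress-measure value $m'$ to assign to every child: concretely, $\delta'((q, m), a)$ is obtained from $\delta(q, a)$ by replacing each atom $p$ by $\bigvee_{m' \leq m} (p, m')$, capturing clause~\ref{item:non-inc}; if $\pi(q) = 1$ while $m$ is odd then clause~\ref{item:pri-1-lazy} cannot be respected and I would set $\delta'((q, m), a) = \mathtt{reject}$ instead. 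This gives the advertised $O(n^2)$ bound on the state space.

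To read off a weak acceptance condition I would assign priority~$0$ to $(q, m)$ when $m$ is odd and priority~$1$ when $m$ is even. Since $m$ is non-increasing along any transition of $\A'$, every cycle in the transition graph lies inside a single level $\eset{(q, m) : q \in Q}$, and on that level all states share the same priority; hence $\A'$ is stratified and therefore weak, with the absorbing $\mathtt{reject}$ state playing the role of the safety sink.

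For correctness I would argue the two implications separately. In the forward direction, from any $w \in L(\A)$, positional determinacy yields an accepting run dag of $\A$ on $w$ of width at most $n$, Lemma~\ref{lemma:small-coBuchi-lazy-pm} provides a lazy progress measure $\mu$ valued in $\eset{1, \dots, 2n}$, and Elvis's strategy in $\Gg(\A', w)$ is simply to copy the dag and pick $m' = \mu(\text{child})$ at each successor: clause~\ref{item:non-inc} keeps the chosen transitions legal, clause~\ref{item:pri-1-lazy} avoids the $\mathtt{reject}$ branch, and clause~\ref{item:lazy-not-forever} forces every infinite path eventually to settle at some odd level and thus visit priority~$0$ infinitely often. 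Conversely, an accepting run dag of $\A'$ projects on the first coordinate to a run dag of $\A$, while the $m$-labels on its vertices form a lazy progress measure on that dag (the level at which a path settles is forced to be odd, so no priority-$1$ state of $\A$ ever sits on an odd level), and the elementary implication recalled just before Lemma~\ref{lemma:small-coBuchi-lazy-pm} then certifies the co-B\"uchi condition. The one conceptual point requiring care, and what I would expect to be the main obstacle to verify, is that the global liveness demand of the co-B\"uchi condition can be discharged by the purely local priority of $(q, m)$; this reduction is enabled precisely by the combination of monotonicity of $m$ along edges and stratification by $m$ in the transition graph of $\A'$.
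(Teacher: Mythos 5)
Your construction is essentially the paper's own proof: both compile the lazy co-B\"uchi progress measure of Lemma~\ref{lemma:small-coBuchi-lazy-pm} into a second state component ranging over $\eset{1,\dots,2n}$, let Elvis guess a non-increasing value at each transition, punish violations of clause~\ref{item:pri-1-lazy}) with an absorbing $\mathtt{reject}$ sink, and read off a weak condition from the parity of the level at which each path stabilises. The only differences are cosmetic (the paper excises the states $(q,m)$ with $\pi(q)=1$ and $m$ odd rather than routing them to $\mathtt{reject}$, and uses priorities $m+1$ instead of $\{0,1\}$), and your correctness argument matches the paper's at the same level of detail.
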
 

% Consider an alternating co-B\"uchi automaton
% $\A = (Q, q_0, \Sigma, \delta, \pi : Q \to \eset{0, 1})$ and let
% $n = |Q|$.  
% The proof proceeds in two steps.
% Firstly, we show that every accepting run dag of~$\A$ has a
% locally-verifiable witness of acceptance, called a
% \emph{lazy progress measure}, with $O(n)$ leaves;
% and secondly we give a construction of an alternating weak
% automaton~$\Ww$ equivalent to~$\A$ that can guess and verify such a 
% lazy progress measure, and which has $O(n^2)$ states.  

% In order to prove Theorem~\ref{thm:co-Buchi-quadratic}, it 
% now

\begin{proof}
It suffices to argue that, given an alternating co-B\"uchi 
automaton~$\A = (Q, q_0, \Sigma, \delta, \pi : Q \to \eset{0, 1})$
with $n$ states, we can design an alternating weak automaton with
$O(n^2)$ states that guesses and certifies a dag run of~$\A$ together
with a lazy co-B\"uchi progress measure on it as described in 
Lemma~\ref{lemma:small-coBuchi-lazy-pm}. 
First we construct a \emph{safety} automaton $\Ss$ with $O(n^2)$
states that simulates the automaton~$\A$ while guessing a lazy
co-B\"uchi progress measure and verifying
conditions~\ref{item:non-inc}) and~\ref{item:pri-1-lazy}) of its
definition. 
Condition~\ref{item:lazy-not-forever}) will be later handled by
turning the safety automaton~$\Ss$ into a weak automaton~$\Ww$ by
appropriately assigning odd or even priorities to all states in~$\Ss$.  
% of some of the states from~$0$ (accepting) to~$1$ (non-accepting).   
We split the design of~$\Ww$ into those two steps so that we can
better motivate and explain the generalized constructions in 
Section~\ref{section:from-parity-to-buchi}. 

% Let $S$ be the set of accepting states of~$\A$, that is 
% $S = \pi^{-1}(0)$. 
The safety automaton~$\Ss$ has the following set of states:

$$  Q \times \eset{2, 4, \dots, 2n}
  \,  
  \\
  \, \cup \,
  \left(\pi^{-1}(0) \times \eset{1, 3, \dots, 2n-1}\right)
  \, \cup \,
  \eset{\mathtt{reject}}\,;$$

its initial state is $(q_0, 2n)$; 
and its transition function $\delta'$ is obtained from the transition 
function~$\delta$ of~$\A$ in the following way:
for every state $(q, i)$, and for every $a \in \Sigma$, the formula
$\delta'\big((q, i), a\big)$ is obtained from $\delta(q, a)$ by
replacing every occurrence of state~$q' \in Q$ by the disjunction 
(i.e., a non-deterministic choice)  
\begin{equation}
\label{eq:disjunction}
  (q', i) \, \vee \, (q', i-1) \, \vee \, \cdots \, \vee \, (q', 1)
\end{equation}
where every occurrence $(q', j)$ for which $\pi(q')=1$ and $j$ is odd 
stands for the state $\mathtt{reject}$. 

In other words, the safety automaton~$\Ss$ can be thought of as
consisting of $2n$ copies $\A_{2n}, \A_{2n-1}, \dots, \A_1$ of~$\A$,
with the non-accepting states $\pi^{-1}(1)$ removed from the
odd-indexed copies $\A_{2n-1}, \A_{2n-3}, \dots, \A_1$, and in whose
acceptance games, Elvis always has the choice to stay in the current
copy of~$\A$ or to move to one of the lower-indexed copies of~$\A$.
Since the transitions of the safety automaton~$\Ss$ always respect the
transitions of the original co-B\"uchi automaton~$\A$, an accepting
run dag of~$\Ss$ yields a run dag of~$\A$
(obtained from the first components of the states $(q, i)$)
and a labelling of its vertices by numbers
in~$\eset{1, 2, \dots, 2n}$
(obtained from the second components of the states $(q, i)$). 
It is routine to verify that the design of the state set and of the
transition function of the safety automaton~$\Ss$ guarantees that the
latter labelling satisfies conditions~\ref{item:non-inc})
and~\ref{item:pri-1-lazy}) of the definition of a lazy co-B\"uchi 
progress measure, where the set of lazy-progress elements is
$\eset{2, 4, \dots, 2n}$.

By setting the state priority function
$\pi' : (q, i) \mapsto i+1$ for all non-$\mathtt{reject}$
states in~$\Ss$, and $\pi' : \mathtt{reject} \mapsto 1$, we obtain
from~$\Ss$ an automaton~$\Ww$ whose acceptance condition is weak
because---by design---the transition function is non-increasing
w.r.t.\ the state priority function. 
% (on non-$\mathtt{reject}$ states). 
One can easily verify that the
addition of this weak acceptance condition to~$\Ss$ allows the
resulting automaton~$\Ww$, for every input word, to guess and verify a
lazy progress measure---if one exists---on a run dag of automaton~$\A$
on the input word, while~$\Ww$ rejects the input word otherwise. 
This completes our summary of the proof of
Theorem~\ref{thm:co-Buchi-quadratic}. 
\end{proof}

\begin{corollary}[Kupferman and Vardi~\cite{KV01}]
\label{thm:Buchi-quadratic}
  There is a translation that given an alternating B\"uchi automaton
  with $n$ states yields an equivalent alternating weak automaton with
  $O(n^2)$ states. 
\end{corollary}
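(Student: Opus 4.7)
The plan is to derive the B\"uchi-to-weak corollary from Theorem~\ref{thm:co-Buchi-quadratic} by a classical duality sandwich: dualize, translate, dualize again. I would build the argument on three standard facts about alternating word automata: (i)~the \emph{dual} $\A^\vee$ of an alternating automaton, obtained by swapping $\wedge$ with $\vee$ and $\mathtt{true}$ with $\mathtt{false}$ in every transition formula and by complementing the acceptance condition, recognizes $\Sigma^\omega \setminus L(\A)$ and has exactly the same number of states; (ii)~the dual of a B\"uchi acceptance condition is a co-B\"uchi condition and vice versa, since shifting priorities by one takes $\eset{1,2}$ to $\eset{0,1}$; and (iii)~the dual of a weak automaton is again weak, as the transition graph---and therefore its stratification---is preserved verbatim by dualization.

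With these ingredients in hand, the proof is a three-line chain. Starting from an alternating B\"uchi automaton $\A$ with $n$ states, I would first form its dual $\A^\vee$, an alternating co-B\"uchi automaton with $n$ states satisfying $L(\A^\vee) = \Sigma^\omega \setminus L(\A)$. Next, I would invoke Theorem~\ref{thm:co-Buchi-quadratic} on $\A^\vee$ to obtain an equivalent alternating weak automaton $\Ww'$ with $O(n^2)$ states. Finally, I would dualize once more to obtain the weak automaton $\Ww = (\Ww')^\vee$, still of size $O(n^2)$, and satisfying $L(\Ww) = \Sigma^\omega \setminus L(\Ww') = L(\A)$.

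There is no genuine obstacle here. The correctness claim $L(\A^\vee) = \Sigma^\omega \setminus L(\A)$ follows from the positional determinacy of the acceptance games, which is the same ingredient already invoked in Section~\ref{section:from-co-buchi-to-weak} to justify that accepting run dags faithfully certify acceptance; one simply observes that a winning strategy for Elvis in $\Gg(\A, w)$ transposes to a winning strategy for Alice in $\Gg(\A^\vee, w)$, and vice versa. The size bound passes through cleanly because both applications of dualization preserve the state set verbatim, so the entire $O(n^2)$ blow-up is attributable to the single application of Theorem~\ref{thm:co-Buchi-quadratic}.
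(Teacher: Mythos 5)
Your proposal is correct and follows essentially the same route as the paper: complement the B\"uchi automaton (with no state blow-up) to get a co-B\"uchi automaton, apply Theorem~\ref{thm:co-Buchi-quadratic}, and complement again. The paper states this more tersely, but the dualization details you spell out (swapping $\wedge$/$\vee$, complementing the acceptance condition, preservation of stratification) are exactly the standard facts it implicitly relies on.
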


The argument of Kupferman and Vardi is simple and it exploits the ease
with which alternating automata can be complemented.
Given an alternating B\"uchi automaton~$\A$ with $n$ states, first
complement it with no state space blow-up, obtaining an alternating
co-B\"uchi automaton with $n$ states, next use the translation from
Theorem~\ref{thm:co-Buchi-quadratic} to obtain an equivalent
alternating weak automaton with $O(n^2)$ states, and finally
complement the latter again with no state space blow-up, hence
obtaining an alternating weak automaton that is equivalent to~$\A$ and
that has $O(n^2)$ states.

\section{Lazy parity progress measures}

Before we introduce \emph{lazy parity progress measures}, we recall the 
definition of (standard) parity progress measures~\cite{JL17,CDFJLP19}.  
We define a \emph{well-ordered tree} to be a finite prefix-closed set of 
sequences of elements of a well-ordered set.
We call such sequences \emph{nodes} of the tree, and their components
are \emph{branching directions}.
We use the standard ancestor-descendant terminology to describe
relative positions of nodes in a tree. 
For example, $\seq{}$ is the \emph{root};
node $\seq{x, y}$ is the \emph{child} of the node $\seq{x}$ that is
reached from it via the branching direction~$y$;
node $\seq{x, y}$ is the \emph{parent} of node $\seq{x, y, z}$;
nodes $\seq{x, y}$ and $\seq{x, y, z}$ are \emph{descendants} of nodes
$\seq{}$ and~$\seq{x}$;
nodes $\seq{}$ and $\seq{x}$ are \emph{ancestors} of nodes
$\seq{x, y}$ and~$\seq{x, y, z}$;
and a node is a \emph{leaf} if it does not have any children.
All nodes in a well-ordered tree are well-ordered by the
\emph{lexicographic order} that is induced by the well-order on the 
branching directions;
for example, we have $\seq{x} < \seq{x, y}$, and
$\seq{x, y, z} < \seq{x, w}$ if $y < w$.
We define the \emph{depth} of a node to be the number of elements in
the eponymous sequence, the \emph{height} of a tree to be the maximum
depth of a node, and the \emph{size} of a tree to be the number of its
nodes. 

Parity progress measures assign labels to vertices of graphs with
vertex priorities, and the labels are nodes in a well-ordered tree. 
A \emph{tree labelling} of a graph with vertex priorities that do not
exceed a positive even integer~$d$ is a mapping from vertices of the
graph to nodes in a well-ordered tree of height at most~$d/2$.  
We write $\seq{m_{d-1}, m_{d-3}, \dots, m_\ell}$, for some odd~$\ell$, 
$1 \leq \ell < d$, to denote such nodes.
We say that such a node has an (odd) \emph{level}~$\ell$ and
% for convenience, we also say that such a node has 
an (even) level~$\ell-1$, and the root~$\seq{}$ has level~$d$. 
Moreover, for every priority~$p$, $0 \leq p \leq d$, we define the
\emph{$p$-truncation} 
$\seq{m_{d-1}, m_{d-3}, \dots, m_\ell}|_p$ in the following way:
$$
  \seq{m_{d-1}, m_{d-3}, \dots, m_\ell}|_p =
  \begin{cases}
    \seq{m_{d-1}, m_{d-3}, \dots, m_\ell} & \text{ for $p \leq \ell$},
    \\
    \seq{m_{d-1}, m_{d-3}, \dots, m_{p+1}} & \text{ for even $p > \ell$}, 
    \\
    \seq{m_{d-1}, m_{d-3}, \dots, m_p} & \text{ for odd $p > \ell$}.
  \end{cases}
$$
We then say that a tree labelling~$\mu$ of a graph $G=(V, E)$ with
vertex priorities $\pi : V \to \eset{0, 1, 2, \dots, d}$ is a
\emph{parity progress measure} if the following 
\emph{progress condition} holds for every edge $(v, u) \in E$:
\begin{compactenum}
\item
  if $\pi(v)$ is even then $\mu(v)|_{\pi(v)} \geq \mu(u)|_{\pi(v)}$;  
\item
  if $\pi(v)$ is odd then $\mu(v)|_{\pi(v)} > \mu(u)|_{\pi(v)}$. 
\end{compactenum}
It is well-known that satisfaction of such local conditions on every
edge in a graph is sufficient for every infinite path in the graph
satisfying the parity condition~\cite{Jur00,JL17}. 
Less obviously, it is also necessary, which can be, again, deduced
from the proof of positional determinacy of parity games due to
Emerson and Jutla~\cite{EJ91}.
In other words, parity progress measures are witnesses for the
property that all infinite paths in a graph satisfy the parity
condition. 
Like for the simpler co-B\"uchi condition, their appeal stems from the
property that they certify conditions that are global and infinitary
by verifying conditions that are local to every edge in the graph. 

Similar to the simpler co-B\"uchi progress measures, parity
progress measures may unfortunately require unbounded or even infinite
well-ordered trees to certify parity conditions on infinite graphs,
and hence we consider lazy parity progress measures, also inspired by
Klarlund's pioneering work~\cite{Kla91}. 
A~\emph{lazy tree} is a well-ordered tree with a distinguished subset
of its nodes called \emph{lazy nodes}.
For convenience, we assume that only leaves may be lazy and the root 
never is. 

A \emph{lazy parity progress measure} is a tree labelling $\mu$
of a graph $(V, E)$, where the labels are nodes in a lazy tree~$T$, 
such that:  
\begin{compactenum}
\item 
\label{lazyparity-one} 
  for every edge $(v, u) \in E$, 
%   we have that 
  $\mu(v)|_{\pi(v)} \geq \mu(u)|_{\pi(v)}$; 
\item 
\label{lazyparity-two} 
  if $\pi(v)$ is odd then node $\mu(v)$ is lazy and its level is at
  least~$\pi(v)$; 
\item 
\label{lazyparity-three} 
  on every infinite path in $G$, there are infinitely many
  vertices~$v$, such that $\mu(v)$ is not lazy.  
\end{compactenum}

First we establish that existence of a lazy progress measure is
sufficient for all infinite paths in a graph to satisfy the parity
condition. 

\begin{lemma}
  If a graph has a lazy parity progress measure then all infinite
  paths in it satisfy the parity condition.  
\end{lemma}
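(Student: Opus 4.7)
The plan is to argue by contradiction: assume there is an infinite path $v_0, v_1, v_2, \ldots$ in the graph on which the largest priority $p$ that occurs infinitely often is odd, and derive a contradiction from the three defining properties of the lazy parity progress measure $\mu$.

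I would first establish two elementary properties of the $p$-truncation: (a) monotonicity with respect to the lexicographic order, i.e., $x \geq y$ implies $x|_p \geq y|_p$; and (b) the composability identity $(x|_q)|_p = x|_p$ whenever $p \geq q$. Both fall out of the observation that $x|_p$ is the prefix of $x$ whose length is the minimum of the depth of $x$ and a quantity that depends only on $p$ and is non-increasing in $p$; the monotonicity then reduces to a short case split on the length of the longest common prefix of $x$ and $y$ together with their respective depths. Now fix $N$ such that $\pi(v_i) \leq p$ for all $i \geq N$. Applying (a) and (b) to condition~\ref{lazyparity-one}, i.e., $\mu(v_i)|_{\pi(v_i)} \geq \mu(v_{i+1})|_{\pi(v_i)}$, yields $\mu(v_i)|_p \geq \mu(v_{i+1})|_p$ for every $i \geq N$. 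Since the nodes of a well-ordered tree are themselves well-ordered by the lexicographic order, this non-increasing sequence stabilises from some index $M \geq N$ onward to a single node $\alpha$.

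Because $p$ occurs infinitely often along the path, I pick $i^* \geq M$ with $\pi(v_{i^*}) = p$. Condition~\ref{lazyparity-two} forces $\mu(v_{i^*})$ to be a lazy leaf whose odd level is at least $p$; on such a node the $p$-truncation is the identity, so $\alpha = \mu(v_{i^*})|_p = \mu(v_{i^*})$, and in particular $\alpha$ is a lazy leaf. A short induction on $j \geq i^*$ then shows $\mu(v_j) = \alpha$ for every $j \geq i^*$: given $\mu(v_j) = \alpha$, the stabilised identity $\mu(v_{j+1})|_p = \alpha$ forces $\alpha$ to be equal to or an ancestor of $\mu(v_{j+1})$, and since $\alpha$ is a leaf only equality is possible. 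Hence $\mu(v_j)$ is lazy for every $j \geq i^*$, which contradicts condition~\ref{lazyparity-three}.

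The main obstacle is putting the monotonicity and composability of $|_p$ on a firm footing: because the amount by which a node gets truncated depends on the node's own depth, a small case analysis is required even though the final statements look intuitive. Once these two facts are in hand, the rest is simply well-foundedness in the lexicographic order to force stabilisation, followed by a ``lazy trap'' at the leaf $\alpha$.
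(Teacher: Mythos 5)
Your proof is correct and follows essentially the same route as the paper's: assume an infinite path whose highest infinitely-occurring priority $p$ is odd, derive the non-increasing chain of $p$-truncations from condition~1, and play the lazy-leaf labels at priority-$p$ vertices (condition~2) off against condition~3 via well-ordering. Your ``lazy trap'' stabilisation argument in fact spells out more carefully the step the paper leaves implicit (why infinitely many of the inequalities must be strict), so it is, if anything, a slightly more complete rendering of the same proof.
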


\begin{proof}
  For the sake of contradiction, assume that there is an infinite path
  $v_1, v_2, v_3, \ldots$ in the graph for which the highest
  priority~$p$ that occurs infinitely often is odd.
  Let $i \geq 1$ be such that $\pi(v_j) \leq p$ for all
  $j \geq i$. 
  By condition~\ref{lazyparity-one}), we have:
  \begin{align}
    \label{align:descending}
    \mu(v_i)|_{p} \geq \mu(v_{i+1})|_{p} \geq \mu(v_{i+2})|_{p} \geq \ldots\,.
%   \tag{$\star$} 
  \end{align}
  Let $i \leq i_1 < i_2 < i_3 < \cdots$ be such that
  $\pi(v_{i_k}) = p$ for all $k = 1, 2, 3, \dots$. 
  By condition~\ref{lazyparity-two}), all labels $\mu(v_{i_k})$, for
  $k = 1, 2, 3, \dots$, are lazy and their level in the tree is at
  least~$p$.
  By condition~\ref{lazyparity-three}), for infinitely many $k$, $\pi(v_{k})$ is not lazy, so infinitely many of the
  inequalities in~(\ref{align:descending}) must be strict, which
  contradicts the well-ordering of the tree~$T$. 
\end{proof}

Now we argue that existence of lazy parity progress measure is also
necessary for a graph to satisfy the parity condition. 
Moreover, we explicitly quantify the size of a lazy ordered tree the
labels from which are sufficient to give a lazy progress measure for a
layered dag, as a function of the width of the dag.
Before we do that, however, we introduce a simple operation that we
call a lazification of a finite ordered tree.
If $T$ is a finite ordered tree, then its \emph{lazification}
$\mathrm{lazi}(T)$ is a finite lazy tree that is obtained from~$T$ in
the following way: 
\begin{compactitem}
\item
  all nodes in $T$ are also nodes in $\mathrm{lazi}(T)$ and they are
  not lazy; 
\item
  for every non-leaf node $t$ in~$T$, $t$ has extra
  lazy children in the tree $\mathrm{lazi}(T)$, one smaller and one
  larger than all the other children, and one in-between every pair of
  consecutive children.  
\end{compactitem}
It is routine to argue that if a tree has $n$ leaves and it is of 
height at most~$h$ then its lazification $\mathrm{lazi}(T)$ has $O(nh)$
nodes and it is also of height~$h$. 

\begin{theorem}[Klarlund~\cite{Kla91}]
\label{theorem:lazy}
  If all infinite paths in a layered dag satisfy the parity condition
  and the width of the dag is at most~$n$, then there is a lazy parity
  progress measure whose labels are nodes in a tree that is a
  lazification of an ordered tree with at most~$n$ leaves.
\end{theorem}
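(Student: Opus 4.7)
I plan to prove Theorem~\ref{theorem:lazy} by induction on the maximum priority $d$ of the dag, following and generalising the Kupferman--Vardi iterative peeling of Lemma~\ref{lemma:small-coBuchi-lazy-pm}. The base case $d = 0$ is trivial: a single-leaf ordered tree with every vertex labelled by the root satisfies all three conditions vacuously. For $d \ge 2$, starting from $G_1 = G$, at each step $i \ge 1$ I define $S_i$ to be the set of vertices of $G_i$ all of whose descendants in $G_i$ have priority at most $d - 2$, let $R_i$ be the transient vertices in $G_i \setminus S_i$, and set $G_{i+1} = G_i \setminus (S_i \cup R_i)$. The set $S_i$ is closed under successors, so the induced sub-dag satisfies parity with maximum priority at most $d - 2$, and the inductive hypothesis yields an ordered tree $T_i$ with at most $\mathrm{width}(S_i)$ leaves together with a lazy parity progress measure $\mu_i \colon S_i \to \mathrm{lazi}(T_i)$.

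The ordered tree $T$ for $G$ is assembled by giving its root children $c_1 < c_2 < \cdots < c_k$ corresponding to the iterations, with $T_i$ attached beneath $c_i$; in the lazification $\mathrm{lazi}(T)$, lazy leaves $r_0, r_1, \ldots, r_k$ interleave the children so that $c_i < r_i < c_{i+1}$. The labelling is $\mu(v) = \seq{c_i} \cdot \mu_i(v)$ for $v \in S_i$, and $\mu(v) = \seq{r_i}$ for $v \in R_i$. The total number of leaves of the ordered tree $T$ is at most $\sum_i \mathrm{width}(S_i) \le \mathrm{width}(G) \le n$, where the last inequality uses superadditivity of $\liminf$ applied to the pairwise disjoint closed sets $S_i$ (together with the fact that each $R_i$ has $\liminf$ layer-size zero and $G_{k+1}$ is empty).

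Verifying the three lazy parity progress measure conditions is then routine. Inside each $S_i$, progress at priorities $\le d - 2$ reduces to the inductive progress on $\mu_i$ because the shared top coordinate $c_i$ is truncated away. Edges from $R_i$ to $S_i$ (or from later pieces back to earlier ones) decrease the top coordinate from $r_i$ (or a larger value) down to $c_i$ (or smaller), which dominates the truncations at priorities $d - 1$ and $d$. Every vertex of odd priority $d - 1$ necessarily lies in some $R_i$---since membership in any $S_i$ would force its priority to be at most $d-2$---and it receives the lazy label $r_i$, whose level equals $d - 1$, as condition~\ref{lazyparity-two}) requires. Finally, every infinite path of $G$ eventually settles inside some $S_i$ (because $R_i$ is transient in $G_i \setminus S_i$ and $G_{k+1}$ is empty), where the inductive hypothesis furnishes infinitely many non-lazy labels, giving condition~\ref{lazyparity-three}).

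The main obstacle is proving that $S_i$ is non-empty whenever $G_i$ is non-empty, so that the iteration terminates in at most $n$ steps. When the maximum priority in $G_i$ is at most the odd value $d - 1$, a K\"onig's-lemma argument on the bounded-width dag produces a vertex all of whose descendants have priority at most $d - 2$---exactly as in Kupferman--Vardi---because otherwise one could extract an infinite path visiting that odd maximum priority infinitely often, contradicting the parity condition. The delicate case is when priority-$d$ vertices occur throughout $G_i$ so that the naive $S_i$ is empty; I expect to handle this by an auxiliary refinement of the peeling---for example, by first stripping the priority-$d$ vertices to expose a sub-dag with odd top priority to which the previous argument applies, and then reintroducing them as ``transparent'' markers under a dedicated child of $T$'s root---and the principal technical work will be in verifying that this refinement still fits inside an ordered tree with at most $n$ leaves.
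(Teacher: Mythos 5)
Your overall strategy---an induction on $d$ that generalises the Kupferman--Vardi peeling, assembles the inner trees $T_1,\dots,T_k$ as ordered children of a fresh root with lazy leaves interleaved for the transient sets, bounds the leaf count by superadditivity of $\liminf$, and checks conditions~\ref{lazyparity-one})--\ref{lazyparity-three}) piece by piece---is exactly the paper's. But the one step you flag as ``delicate'' and leave open is precisely the step on which the construction stands or falls, and as stated your iteration can fail to make progress. If you start from $G_1 = G$, then on a dag in which every infinite path visits priority $d$ infinitely often you get $S_1 = \emptyset$, hence $R_1 = \emptyset$ (every vertex of a non-empty layered dag of this kind has infinitely many descendants), hence $G_2 = G_1$, and the peeling never terminates. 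The K\"onig's-lemma argument for the existence of a $(d-1)$-safe vertex only works once the priority-$d$ vertices are already out of the picture.

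The paper's resolution is to peel \emph{before} the iteration starts: remove $D$, the set of all priority-$d$ vertices, together with $R_0$, the vertices of $G^{\leq d-1}$ with only finitely many descendants there (so $D \cup R_0$ is exactly the set of vertices from which every path reaches priority $d$), and set $G_1 = G \setminus (D \cup R_0)$. Then $G_1$ has top priority at most $d-1$, every vertex of $G_1$ retains a successor in $G_1$, and your argument for $S_1 \neq \emptyset$ applies verbatim. Crucially, $D$ is mapped to the \emph{root} of $\mathrm{lazi}(T)$, not to a ``dedicated child'': the root has level $d$, so the progress condition at a priority-$d$ source is vacuous (the $d$-truncation of every label is $\seq{}$); the root is non-lazy, which settles condition~\ref{lazyparity-three}) for paths seeing $d$ infinitely often; and the root costs no extra leaf. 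That last point is not cosmetic: $D$ can be non-empty while contributing nothing to the width of $G$, so a dedicated child could push the leaf count to $n+1$. Moreover, edges run both from the $S_i$'s into $D$ and from $D$ into every piece of the decomposition, so no fixed position among the root's children works for $D$; only the root, where the comparison is trivial under $d$-truncation, does. ($R_0$ goes onto the smallest lazy child, below $c_1$.) With this initial peeling in place, the remainder of your argument coincides with the paper's proof.
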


% \begin{theorem}[Emerson and Jutla~\cite{EJ91}]
% \label{theorem:fundamental}
%   Every path in a graph with vertex priorities satisfies the parity
%   condition if and only if there is a parity progress measure on the
%   graph. 
% \end{theorem}

% We give a somewhat non-standard proof of
% (the harder ``only if'' direction in)
% this classic theorem for the special case of layered dags of bounded
% width.

\begin{proof}
%   The structure of the progress measure that we construct in this proof
%   will motivate our definition of a \emph{lazy parity progress measure},
%   which will in turn enable the use of
%   \emph{universal trees}~\cite{CDFJLP19} to establish our main technical
%   result in Section~\ref{section:from-parity-to-buchi}. 
%
% \begin{proof}% 
%   (of Theorem~\ref{theorem:fundamental} for the special
%   case of layered dags of bounded width)
%    
  Klarlund's proof~\cite{Kla91} is very succinct and it heavily relies 
  on the result of Klarlund and Kozen on Rabin 
  measures~\cite{KK95}. 
  Our proof is not only self-contained but it also is more
  constructive and transparent.
  The hierarchical decomposition describes the fundamental structure
  of accepting run dags of alternating parity automata and it may be
  of independent interest.  
  The argument presented here is a generalization of the proof of
  Lemma~\ref{lemma:small-coBuchi-lazy-pm}---given in 
  Section~\ref{section:from-co-buchi-to-weak}---from co-B\"uchi 
  conditions to parity conditions.  

  Consider a layered dag
  $G = (V, E, \pi)$ where $\pi: V\rightarrow \eset{0, 1, 2, \dots, d}$. 
  For a priority~$p$, $0 \leq p < d$, we write $G^{\leq p}$ for the
  subgraph induced by the vertices whose priority is at most~$p$.

  We describe the following decomposition of~$G$. 
  Let $D$ be the set that consists of all vertices of the top even
  priority~$d$ in $G$, and $R_0$ all those vertices in the subgraph
  $G^{\leq d-1}$ that have finitely many descendants. 
  We say that those vertices are \emph{$(d-1)$-transient} in $G^{\leq d-1}$.
  In other words, $D \cup R_0$ is the set of vertices from which every path
  reaches (possibly immediately) a vertex of priority~$d$.

  Let $G_1 = G \setminus (D \cup R_0)$ be the layered dag obtained from~$G$ by
  removing all vertices in~$D \cup R_0$.
  Observe that every vertex in~$G_1$ has at least one successor and
  hence---unless $G_1$ is empty---all maximal paths are infinite.
  W.l.o.g., assume henceforth that $G_1$ is not empty.
  We argue that there must be a vertex in~$G_1$ whose all descendants
  have priorities at most~$d-2$;
  call such vertices \emph{$(d-1)$-safe} in~$G_1$.
  Indeed, otherwise it would be easy to construct an infinite path
  with infinitely many occurrences of the odd priority~$d-1$ and no 
  occurrences of the top even priority~$d$.

  Let $S_1$ be the set of all the $(d-1)$-safe vertices in~$G_1$.
  Let $H_1$ be the subgraph of $G$ induced by $S_1$, let $n_1$ be 
  the width of~$H_1$, and note that $n_1 > 0$. 
  Set $G'_1 = G_1 \setminus S_1$ to be the layered dag obtained 
  from~$G_1$ by removing all $(d-1)$-safe vertices in~$G_1$.

  Let $R_1$ be the set of all vertices in~$G'_1$ that have only
  finitely many descendants; call such vertices
  $(d-1)$-transient in~$G'_1$.
  Finally, let $G_2$ be the layered dag obtained from $G'_1$ by
  removing all the $(d-1)$-transient vertices in~$G'_1$.
  Note that the width of~$G_2$ is smaller than the width of~$G_1$ by
  at least~$n_1 > 0$.
  
  Unless graph $G_2$ is empty, we can now apply the same steps
  to~$G_2$ that we have described for~$G_1$, and obtain: 
  \begin{compactitem}
  \item
    the set $S_2$ of $(d-1)$-safe vertices in~$G_2$;
  \item
    the subgraph $H_2$ of~$G_2$ induced by $S_2$, which is of width 
    $n_2 > 0$; 
  \item
    the layered dag $G'_2$, obtained from $G_2$ by removing all the
    vertices in~$S_2$;
  \item
    the set $R_2$ of $(d-1)$-transient vertices in~$G'_2$;
  \item
    the layered graph~$G_3$, obtained from~$G_2$ by removing all
    vertices in~$S_2 \cup R_2$, and whose width is smaller than the
    width of~$G_2$ by at least~$n_2 > 0$.
  \end{compactitem}
  We can continue in this fashion, obtaining graphs
  $G_1, G_2, \dots, G_{k+1}$, until the graph~$G_{k+1}$, for some
  $k \geq 0$, is empty.
  Since the width of~$G$ is at most~$n$ and the widths of the graphs
  $G_1, G_2, \dots, G_k$ are positive (unless $k=0$), we have that 
  $k \leq n$ and $\sum_{i=1}^k n_i \leq n$. 

  The proces described above yields a hierarchical decomposition of
  the layered dag;
  we now define---by induction on~$d$---the tree that describes the
  shape of this decomposition.
  We then argue that the lazification of this tree provides the set of
  labels in a lazy parity progress measure. 

%   We sketch this argument only at a high level because our main goal 
%   is slightly different from merely re-proving
%   Theorem~\ref{theorem:fundamental} in the special case of layered
%   dags of bounded width.
%   Indeed, our main objective here is to argue that for our 
%   automata-theoretic purposes in
%   Section~\ref{section:from-parity-to-buchi}, it is imperative not to
%   refine the shape of the tree of the hierarchical decomposition at
%   all. \karoliina{What does that mean?}
%   Instead, we will use the hierarchical decomposition result to
%   motivate the concept of a lazy parity progress measure, which allows
%   to certify that the original layered dag satisfies the parity
%   condition on all infinite paths, using a tree whose size is at most
%   bi-linear in the width of the dag and the number of priorities. 

  In the base case $d=0$, the shape of the decomposition is 
  the well-ordered tree~$T$ of height $h = 0/2 = 0$ with only a root
  node~$\seq{}$.  
  It is straightforward to see that the function that maps every
  vertex onto the root is a (lazy) progress measure.

  For $d\geq 2$, note that all vertices in dags $H_1, H_2, \dots, H_k$
  have priorities at most~$d-2$.
  By the inductive hypothesis, there are trees $T_1, T_2, \dots, T_k$,
  of heights at most $h-1 = (d-2)/2$ and with at most $n_1, n_2,
  \dots, n_k$ leaves, respectively, which are the shapes of the
  hierarchical decompositions of dags $H_1, H_2, \dots, H_k$,
  respectively.
%   Moreover, the trees $T_1, T_2, \dots, T_k$ have widths at most 
%   $n_1, n_2, \dots, n_k$, and their lazifications provide labels for
%   lazy parity progress measures on dags $H_1, H_2, \dots, H_k$,
%   respectively. 

  We now construct the finite ordered tree~$T$ of height at most 
  $h = d/2$ that is the shape of the hierarchical decomposition
  of~$G$:
  let~$T$ consist of the root node~$\seq{}$ that has $k$ children, 
  which are the roots of the subtrees $T_1, T_2, \dots, T_k$, in that
  order.
  Note that the number of leaves of~$T$ is at most
  $\sum_{i=1}^k n_i \leq n$. 
  Consider the following mapping from vertices in the graph onto nodes
  in the lazification~$\mathrm{lazi}(T)$ of tree~$T$:
  \begin{compactitem}
  \item
    vertices in set~$D$ are mapped onto the root of $\mathrm{lazi}(T)$; 
  \item
    vertices in transient sets $R_0, R_1, R_2, \dots, R_k$ are mapped
    onto the lazy children of the root of $\mathrm{lazi}(T)$:
    those in $R_0$ onto the smallest lazy child, those in $R_1$ onto
    the lazy child between the roots of~$T_1$ and~$T_2$, etc.;
  \item
    vertices in subgraphs $H_1, H_2, \dots, H_k$ are inductively
    mapped onto the appropriate nodes in the lazy subtrees
    of~$\mathrm{lazi}(T)$ that are rooted in the $k$ non-lazy children
    of the root.
  \end{compactitem}
It is easy to verify that this mapping satisfies
conditions~\ref{lazyparity-one}) and~\ref{lazyparity-two}) of the
definition of a lazy parity progress measure.  
Condition~\ref{lazyparity-three}) is ensured by the fact that the root
of $T$ is not lazy and by the inductive hypothesis.
Recall that every infinite path satisfies the parity condition, thus
the highest priority $p$ seen infinitely often on a given path is
even.  
If $p=d$, the path visits infinitely often vertices labelled by the
root of~$T$. 
Otherwise, eventually the path contains only vertices in one of the
sets $S_i$ and we can use the inductive hypothesis.    
\end{proof}

\section{From parity to B\"uchi via universal trees}
\label{section:from-parity-to-buchi}

In this section we complete the proof of the main technical result of
the paper, which is a quasi-polynomial translation from alternating
parity automata to alternating weak automata.
The main technical tools that we use to design our
translation are lazy progress measures and universal
trees~\cite{JL17,CDFJLP19}, and the state space blow-up of the
translation is merely quadratic in the size of the smallest universal
tree.  
Nearly tight quasi-polynomial upper and lower bounds have recently
been given for the size of the smallest universal
trees~\cite{JL17,CDFJLP19} and, in particular, they imply that if the
number of priorities in a family of alternating parity automata is at
most logarithmic in the number of states, then the state space blow-up
of our translation is only polynomial.

\begin{theorem}
\label{theorem:main}
  There is a translation that given an alternating parity automaton 
  with $n$ states and $d$ priorities yields an equivalent alternating
  weak automaton whose number of states is polynomial if
  $d = O(\log n)$ and it is $n^{O\left(\lg (d/\lg n)\right)}$ if
  $d = \omega(\log n)$.
\end{theorem}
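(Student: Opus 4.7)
The plan is to reduce the translation to Corollary~\ref{thm:Buchi-quadratic} by first building, from the given alternating parity automaton $\A = (Q, q_0, \Sigma, \delta, \pi)$ with $n$ states and $d$ priorities, an equivalent alternating B\"uchi automaton $\Bb$ whose state-space blow-up is proportional to the size of a smallest $(n, d/2)$-universal ordered tree $U$. The construction will closely parallel the safety-then-weak construction in the proof of Theorem~\ref{thm:co-Buchi-quadratic}, with the linearly ordered set $\{1, 2, \dots, 2n\}$ of co-B\"uchi progress-measure values replaced by nodes of the lazification $\mathrm{lazi}(U)$ of a universal tree, and with the lazy parity progress measure of Theorem~\ref{theorem:lazy} playing the role that Lemma~\ref{lemma:small-coBuchi-lazy-pm} plays in the co-B\"uchi setting.

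I would take the state set of $\Bb$ to be $Q \times \mathrm{lazi}(U)$ augmented with an absorbing $\mathtt{reject}$ state, with initial state $(q_0, \seq{})$. The transition formula $\delta_\Bb((q, t), a)$ is obtained from $\delta(q, a)$ by replacing each occurrence of a state $q' \in Q$ with the disjunction of all pairs $(q', t')$ satisfying $t|_{\pi(q)} \geq t'|_{\pi(q)}$, where any such pair in which $\pi(q')$ is odd but $t'$ is either not lazy or of level strictly less than $\pi(q')$ is substituted by $\mathtt{reject}$. The B\"uchi acceptance condition assigns priority $2$ to $(q, t)$ whenever $t$ is non-lazy, and priority $1$ to $(q, t)$ with $t$ lazy as well as to $\mathtt{reject}$. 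By construction, any accepting run dag of $\Bb$ on $w$ projects to a run dag of $\A$ on $w$ together with a labelling into $\mathrm{lazi}(U)$ that satisfies conditions~\ref{lazyparity-one}) and~\ref{lazyparity-two}) locally (thanks to the transition formulas) and condition~\ref{lazyparity-three}) globally (thanks to the B\"uchi condition), hence a lazy parity progress measure; by the sufficiency lemma preceding Theorem~\ref{theorem:lazy}, this implies that $\A$ accepts $w$. For the converse, if $\A$ accepts $w$, Theorem~\ref{theorem:lazy} supplies a lazy parity progress measure into $\mathrm{lazi}(T)$ for some ordered tree $T$ of at most $n$ leaves and height at most $d/2$; by universality, $T$ embeds into $U$, the embedding lifts to $\mathrm{lazi}(T) \to \mathrm{lazi}(U)$, and the transported measure provides Elvis with a winning strategy in $\Gg(\Bb, w)$.

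For the size, since lazification inflates a tree by at most a constant factor, $\Bb$ has $O(n \cdot |U|)$ states, and composing with Corollary~\ref{thm:Buchi-quadratic} adds only a further quadratic blow-up. Plugging in the quasi-polynomial bounds of~\cite{JL17,CDFJLP19} on $|U|$ then yields the claimed bounds: polynomial when $d = O(\log n)$, and $n^{O(\lg (d/\lg n))}$ when $d = \omega(\log n)$. The main technical hurdle I expect will be verifying that an embedding of an ordered tree of at most $n$ leaves into $U$ lifts correctly to an embedding of the corresponding lazifications---mapping lazy nodes to lazy nodes and commuting with the $p$-truncation operation---so that conditions~\ref{lazyparity-one}) and~\ref{lazyparity-two}) survive the transport from $\mathrm{lazi}(T)$ to $\mathrm{lazi}(U)$. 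Once this is in place, synthesising Elvis's winning strategy in $\Gg(\Bb, w)$ from a transported progress measure, and in the opposite direction extracting a lazy parity progress measure on a run dag of $\A$ from any accepting run dag of $\Bb$, both follow by standard positional arguments.
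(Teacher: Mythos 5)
Your overall route is exactly the paper's: build a safety automaton over $Q \times \mathrm{lazi}(U)$ that guesses a lazy parity progress measure, enforce conditions~\ref{lazyparity-one}) and~\ref{lazyparity-two}) locally in the transitions, enforce condition~\ref{lazyparity-three}) by the B\"uchi priorities, get completeness from Theorem~\ref{theorem:lazy} plus universality of~$U$ (lifting embeddings to lazifications), and finish with Corollary~\ref{thm:Buchi-quadratic}. The cosmetic differences (you keep the full product as the state set and push the condition-\ref{lazyparity-two} check into the transition formulas, rather than restricting which pairs $(q,t)$ are states) are harmless, and your size accounting agrees with the paper's $O(ndL_U)$.

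There is, however, one concrete step that fails: the choice of initial state $(q_0, \seq{})$. Under the paper's lexicographic order on tree nodes we have $\seq{x} < \seq{x,y}$, so the root is the \emph{least} element, and the transition constraint $t|_{\pi(q)} \geq t'|_{\pi(q)}$ makes the second component non-increasing (after truncation) along every run. Starting at the root therefore pins Elvis to the root: for any state $q$ with $\pi(q) \leq d-1$, the only $t'$ with $t'|_{\pi(q)} \leq \seq{}|_{\pi(q)} = \seq{}$ is $t' = \seq{}$ itself, so Elvis can only escape into the tree while sitting at a priority-$d$ state. Concretely, take $d=2$ with $\pi(q_0)=\pi(q_1)=1$, $\delta(q_0,a)=q_1$, $\delta(q_1,a)=q_2$, and $q_2$ of priority $0$ with a self-loop: the language is $\Sigma^\omega$, but from $(q_0,\seq{})$ your automaton is forced to offer only $(q_1,\seq{})$, which your own condition-\ref{lazyparity-two} filter replaces by $\mathtt{reject}$, so the automaton accepts nothing. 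The fix is the one the paper makes: initialise at the \emph{largest} admissible node for $q_0$ (in $\mathrm{lazi}(U)$ this is the largest lazy child of the root, which dominates every node under every relevant truncation), so that the progress measure supplied by Theorem~\ref{theorem:lazy} can be simulated from the first move. With that correction, and the routine verification you already flag that tree embeddings lift to lazifications compatibly with laziness and with $p$-truncations, your argument matches the paper's proof.
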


Before we proceed to prove the theorem, we recall the notion of
universal ordered trees. 
An \emph{$(n, h)$-universal (ordered) tree}~\cite{CDFJLP19} is an 
ordered tree, such that every finite ordered tree of height at 
most $h$ and with at most $n$ leaves can be isomorphically embedded
into it. 
In such an embedding, the root of the tree must be mapped onto the
root of the universal tree, and the children of every node must be
mapped---injectively and in an order-preserving way---onto the
children of its image.
In order to upper-bound the size of the blow-up in our parity to weak
translation, we rely on the following upper bound on the size of the 
smallest universal trees.

\begin{theorem}[Jurdzi\'nski and Lazi\'c~\cite{JL17}] 
\label{theorem:universal}
  For all positive integers $n$ and $h$, there is an
  $(n, h)$-universal tree with at most quasi-polynomial number of
  leaves.
  More specifically, the number of leaves is polynomial in~$n$ if
  $h = O(\log n)$, and it is $n^{\lg(h/\lg n)+O(1)}$ if
  $h = \omega(\log n)$. 
\end{theorem}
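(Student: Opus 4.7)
The plan is to construct the $(n,h)$-universal tree by recursion on $(h,n)$ in lexicographic order, and then to verify a closed-form bound on its number of leaves using Pascal's identity. Define $U_{n,h}$ as follows. In the base cases, take $U_{n,0}$ to be a single leaf and $U_{1,h}$ to be the unique rooted path of length $h$. For $n \geq 2$ and $h \geq 1$, declare $U_{n,h}$ to be the tree whose root has, from left to right, three groups of children: first the children of the root of $U_{\lfloor n/2 \rfloor, h}$, then a single subtree that is a copy of $U_{n, h-1}$, and finally the children of the root of $U_{\lceil n/2 \rceil - 1, h}$. Since $\lfloor n/2 \rfloor < n$ and $\lceil n/2 \rceil - 1 < n$ for $n \geq 2$, the recursion is well-founded.

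Universality is proved by induction on $(h,n)$. Given an ordered tree $T$ of height at most $h$ with at most $n$ leaves, list its root's children left-to-right as $T_1, \ldots, T_k$, each of height at most $h-1$ and with total leaf count at most $n$. Let $i$ be the smallest index for which $T_1, \ldots, T_i$ collectively contain strictly more than $\lfloor n/2 \rfloor$ leaves. Then $T_1, \ldots, T_{i-1}$ have at most $\lfloor n/2 \rfloor$ leaves in total, so by induction the tree whose root has children $T_1, \ldots, T_{i-1}$ embeds into $U_{\lfloor n/2 \rfloor, h}$ via a root-preserving embedding, placing those $T_j$'s into the first group of children of the root of $U_{n, h}$. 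The middle subtree $T_i$ has height at most $h-1$ and at most $n$ leaves, so it embeds into the middle copy of $U_{n, h-1}$ by induction on $h$. Finally, $T_{i+1}, \ldots, T_k$ have total leaf count at most $n - \lfloor n/2 \rfloor - 1 = \lceil n/2 \rceil - 1$, and embed symmetrically into the last group.

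Let $u(n,h)$ be the number of leaves of $U_{n,h}$. The construction gives the recurrence
\[
  u(n,h) \,=\, u(\lfloor n/2 \rfloor, h) \,+\, u(n, h-1) \,+\, u(\lceil n/2 \rceil - 1, h),
\]
with $u(n, 0) = u(1, h) = 1$ and $u(0, h) = 0$. I will prove the closed-form bound $u(n,h) \leq n \binom{\lceil \log_2 n \rceil + h}{h}$ by induction on $h + n$. Writing $k = \lceil \log_2 n \rceil$ and observing that $\lceil \log_2 \lfloor n/2 \rfloor \rceil$ and $\lceil \log_2 (\lceil n/2 \rceil - 1) \rceil$ are both at most $k - 1$, the inductive hypothesis bounds the right-hand side by $(\lfloor n/2 \rfloor + \lceil n/2 \rceil - 1) \binom{k - 1 + h}{h} + n \binom{k + h - 1}{h - 1} = (n - 1) \binom{k + h - 1}{h} + n \binom{k + h - 1}{h - 1}$, and then Pascal's identity $\binom{k+h}{h} = \binom{k+h-1}{h} + \binom{k+h-1}{h-1}$ combined with $n - 1 \leq n$ yields the desired bound $n \binom{k+h}{h}$.

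The two asymptotic regimes in the theorem follow from the standard estimate $\binom{a+b}{a} \leq (e(a+b)/a)^a$ with $a = \lceil \log_2 n \rceil$ and $b = h$: if $h = O(\log n)$ the factor $(e(\log n + h)/\log n)^{\log n}$ is $n^{O(1)}$, hence $u(n,h) = n^{O(1)}$; while if $h = \omega(\log n)$ the ratio $(\log n + h)/\log n$ is $\Theta(h/\log n)$, so the factor is $n^{\lg(h/\lg n) + O(1)}$ and the extra $n$ in $u(n,h)$ is absorbed into the $O(1)$ in the exponent. The main technical obstacle is the correct accounting in the splitting step so that the three groups of children together cover every tree (the off-by-one between $\lceil n/2 \rceil - 1$ and $\lceil n/2 \rceil$ is delicate), and the careful $\lceil\log\rceil$ rounding when verifying the binomial bound through the three-term recurrence; once these are handled the asymptotic estimates are routine.
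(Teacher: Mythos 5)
The paper does not prove this theorem itself---it is imported verbatim from Jurdzi\'nski and Lazi\'c---and your construction (split the leaf budget as $\lfloor n/2\rfloor$ / middle child $U_{n,h-1}$ / $\lceil n/2\rceil-1$, with the leaf count bounded by $n\binom{\lceil\lg n\rceil+h}{h}$ via Pascal's identity) is essentially the same recursive construction and analysis as in that reference, and it checks out. The only nit is the unhandled (but trivial) case in the embedding argument where no index $i$ exists because $T$ has at most $\lfloor n/2\rfloor$ leaves in total, in which case $T$ embeds directly into the first group.
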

We also note that Czerwi\'nski et al.~\cite{CDFJLP19} have
subsequently proved a nearly-matching quasi-polynomial lower bound,
hence establishing that the smallest universal trees have
quasi-polynomial size.

% \begin{lemma}
% About the size of the lazification of a $(n,\ell)$-universal tree.
% \end{lemma}

In order to prove Theorem~\ref{theorem:main}, we establish the
following lemma that provides a translation from alternating parity
automata to alternating B\"uchi automata whose state-space blow-up is
tightly linked to the size of universal trees. 

\begin{lemma}
\label{lemma:parity-to-buchi}
  There is a translation that given an alternating parity automaton
  with $n$ states and $d$ priorities yields an equivalent alternating
  B\"uchi automaton whose number of states is $O(ndL_U)$ where $L_U$ 
  is the number of leaves in an $(n, d/2)$-universal ordered tree~$U$.  
\end{lemma}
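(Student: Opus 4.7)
The plan is to adapt the ``safety automaton guessing a lazy progress measure'' design from the proof of Theorem~\ref{thm:co-Buchi-quadratic}, replacing lazy co-B\"uchi progress measures with lazy \emph{parity} progress measures and using the universal tree~$U$ as a fixed, bounded pool of labels that is large enough to host the progress measure on any accepting run dag of width at most~$n$. First I would pass from $U$ to its lazification $\mathrm{lazi}(U)$; since $U$ has height at most $d/2$ and $L_U$ leaves, it has $O(dL_U)$ nodes, and lazification adds only $O(1)$ lazy children per existing child of each non-leaf, so $|\mathrm{lazi}(U)| = O(dL_U)$ as well.

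Next, I would construct the alternating B\"uchi automaton $\Bb$ with state set $Q \times \mathrm{lazi}(U) \cup \eset{\mathtt{reject}}$ and initial state $(q_0, t_0)$, where $t_0$ is a maximal lazy leaf of $\mathrm{lazi}(U)$. The transition function is obtained from $\delta$ in direct analogy with the proof of Theorem~\ref{thm:co-Buchi-quadratic}: each occurrence of~$q'$ in $\delta(q, a)$ is replaced by the disjunction $\bigvee_{t'} (q', t')$, where $t'$ ranges over all nodes of $\mathrm{lazi}(U)$ satisfying the truncation condition $t|_{\pi(q)} \geq t'|_{\pi(q)}$; additionally, if $\pi(q)$ is odd and $t$ is either not lazy or has level less than $\pi(q)$, the entire disjunction is replaced by $\mathtt{reject}$. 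The B\"uchi priority assigns $2$ to $(q, t)$ when $t$ is not lazy, and $1$ to $(q, t)$ when $t$ is lazy and to $\mathtt{reject}$. This design hard-wires conditions~\ref{lazyparity-one}) and~\ref{lazyparity-two}) into the transition structure and delegates condition~\ref{lazyparity-three}) to the B\"uchi acceptance, yielding the claimed state count $|Q| \cdot |\mathrm{lazi}(U)| + 1 = O(ndL_U)$.

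For correctness, an accepting run dag of $\Bb$ on $w$ projects via its first coordinate to a run dag of $\A$ on $w$, and by the design of~$\delta$ the second coordinates form a lazy parity progress measure on it, so by the sufficient direction established just before Theorem~\ref{theorem:lazy} every infinite path satisfies the parity condition and hence $\A$ accepts~$w$. Conversely, if $\A$ accepts~$w$ then there is an accepting run dag of width at most~$n$, and Theorem~\ref{theorem:lazy} supplies a lazy parity progress measure $\mu$ with labels in $\mathrm{lazi}(T)$ for some ordered tree $T$ of height $\leq d/2$ with $\leq n$ leaves. Since $U$ is $(n, d/2)$-universal, $T$ embeds isomorphically into~$U$, and this embedding extends canonically to $\mathrm{lazi}(T) \hookrightarrow \mathrm{lazi}(U)$ by sending the lazy children of each non-leaf $t \in T$ to lazy children of its image in the same relative positions; composing with~$\mu$ yields a labelling of the run dag in $\mathrm{lazi}(U)$ that defines a positional winning strategy for Elvis in $\Gg(\Bb, w)$.

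The main obstacle is verifying that the canonical extension $\mathrm{lazi}(T) \hookrightarrow \mathrm{lazi}(U)$ really preserves truncation order, laziness, and levels, so that conditions~\ref{lazyparity-one})--\ref{lazyparity-three}) transfer verbatim to the composed labelling and in particular no play under Elvis's strategy ever reaches~$\mathtt{reject}$. This reduces to elementary bookkeeping: any order-preserving isomorphic embedding of ordered trees preserves depth and the child-order structure and hence commutes with the $p$-truncation operation, and the extension to lazifications can be chosen to map lazy children to lazy children in an order-preserving way; but this verification must be spelled out carefully to conclude the lemma cleanly.
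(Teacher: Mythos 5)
Your proposal is correct and follows essentially the same route as the paper's proof: a safety automaton that guesses a lazy parity progress measure with labels in $\mathrm{lazi}(U)$, with conditions~\ref{lazyparity-one}) and~\ref{lazyparity-two}) enforced locally by the transition function, condition~\ref{lazyparity-three}) enforced by assigning B\"uchi priority~$2$ to non-lazy nodes and~$1$ to lazy nodes, and completeness obtained from Theorem~\ref{theorem:lazy} together with the embedding of $\mathrm{lazi}(T)$ into $\mathrm{lazi}(U)$ via universality. The only (cosmetic) difference is that you take the full product $Q \times \mathrm{lazi}(U)$ and police condition~\ref{lazyparity-two}) by redirecting transitions to $\mathtt{reject}$, whereas the paper inductively restricts which states may be paired with which tree nodes; both yield the $O(ndL_U)$ bound.
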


Note that Theorem~\ref{theorem:main} follows from
Lemma~\ref{lemma:parity-to-buchi} by
% the quadratic translation from
% alternating B\"uchi automata to alternating weak automata due to
% Kupferman and Vardi~\cite{KV01}
Theorem~\ref{theorem:universal} and
Corollary~\ref{thm:Buchi-quadratic}. 
% the quasi-polynomial upper bound on the size of the smallest universal
% trees due to Jurdzi\'nski and Lazi\'c~\cite{JL17,CDFJLP19}. 

\begin{proof}(of Lemma~\ref{lemma:parity-to-buchi}) 
Given an alternating parity
automaton~$\A = (Q, q_0, \Sigma, \delta, \pi : Q \to \eset{0, 1,\ldots, d})$
with $n$ states, we now design an alternating B\"uchi automaton
% with a quasi-polynomial number of states
% \laure{Needs to be completed, I never know what to take for the size
%   of a universal tree...???} 
that guesses and certifies a dag run of~$\A$ together with a
lazy parity progress measure on it. 
As for the co-B\"uchi to weak case, we first construct a \emph{safety}
automaton that simulates the automaton~$\A$ while guessing a lazy  
parity progress measure and verifying conditions~\ref{lazyparity-one})
and~\ref{lazyparity-two}) of its definition. 
Condition~\ref{lazyparity-three}) will be later handled by
turning the safety automaton into a B\"uchi automaton by 
appropriately assigning priorities $1$ or $2$ to all states in the
safety automaton.

Below we give a general construction of an alternating B\"uchi
automaton $\Bb_T$ from any lazy well-ordered tree $T$, and then we
argue that the alternating parity automaton $\A$ is equivalent to the
alternating B\"uchi automaton $\Bb_{\mathrm{lazi}(U)}$, for every 
$(n, d/2)$-universal tree~$U$.

Let $T$ be a lazy tree of width $n$ and height~$d/2$. 
The construction is by induction on~$d$. 
The safety automaton $\Ss_T$ has the following set of states, which
are pairs of an element of $Q$ and of a node in~$T$.
\begin{itemize}
\item
  If $d=0$, then the set of states of $\Ss_T$ is 
  $\left(Q \times \{ \seq{} \}\right) \cup \eset{\mathtt{reject}}$.
\item
  Otherwise, let $\seq{x_1}, \seq{x_2}, \ldots, \seq{x_k}$ be the
  children of the root, and $1\leq i_1 < i_2 < \ldots < i_m \leq k$ 
  are the indices of its leaves that are lazy.

  For $i \notin \eset{i_1,i_2,\ldots, i_m}$, let $T_i$ be the lazy
  subtrees of $T$ of height at most $d/2 - 1$ rooted in $\seq{x_i}$.  
  By induction, for all $i$, we obtain an alternating B\"uchi
  automaton that is obtained from the lazy tree $T_i$ and from the 
  alternating parity automaton $\A$ restricted to the states of
  priority up to $d-2$.
  Let $\Omega_i$ denote its set of non-$\mathtt{reject}$ states.
  They are pairs consisting of an element of $Q$ and of a node in a
  tree of height $d/2-1$;
  the latter is a sequence $\seq{m_{d-3}, m_{d-5}, \dots, m_\ell}$ of
  at most $d/2 - 1$ branching directions. 
  Let $\Gamma_i$ be the set consisting of the pairs
  $\left(q, \seq{x_i, m_{d-3}, m_{d-5}, \dots, m_\ell}\right)$ for
  $\left(q, \seq{m_{d-3}, m_{d-5}, \dots, m_\ell}\right) \in \Omega_i$.
  Set $Q^{(d)}$ (resp. $Q^{(<d)}$) the set of states of priority $d$ 
  (resp.~$<d$) in~$\A$.
  The states of $\Ss_T$ are defined as:    
$$
    \left(Q^{(d)}\times \{\seq{}\}\right) \: \cup \: 
    \left( Q^{(<d)} \times \{\seq{x_{i_1}}, \seq{x_{i_2}}, \ldots,
      \seq{x_{i_m}}\}\right) 
    \cup \: \bigcup_{i=1}^k \Gamma_i  \: \cup \: \eset{\mathtt{reject}}\,.
$$
\end{itemize}

The initial state is $(q_0, t)$ where $t$ is the largest tuple such
that $(q_0, t)$ is a state.
Let us now define the transition function:  
for every state $(q, t)$, and for every $a \in \Sigma$, the formula
$\delta'\big((q, t), a\big)$ is obtained from $\delta(q, a)$ by
replacing every occurrence of state~$q' \in Q$ by the disjunction 
(i.e., a non-deterministic choice)  
$$
   \bigvee \lreset{(q', t') \: : \: t|_{\pi(q)} \geq t'|_{\pi(q)}}\,,
%    \bigvee_{t' \text{ s.t. } t|_{\pi(q)} \geq t'|_{\pi(q)}} (q', t')
$$
where every occurrence $(q', t')$ which is not in the set of states 
stands for the state $\mathtt{reject}$.

In other words, the safety automaton~$\Ss_T$ can be thought of as 
consisting of copies of $\A$, for each node of the tree $T$, in whose 
acceptance games Elvis always has the choice to stay in the current
copy of~$\A$ or to move to one of a smaller node with respect to the
priority of the current state. 
Since the transitions of the safety automaton~$\Ss_T$ always respect the
transitions of the original parity automaton~$\A$, an accepting
run dag of~$\Ss_T$ yields a run dag of~$\A$
(obtained from the first components of the states $(q, t)$)
and a labelling of its vertices by nodes in $T$
(obtained from the second components of the states $(q, t)$). 
It is routine to verify that the design of the state set and of the
transition function of the safety automaton~$\Ss_T$ guarantees that the
latter labelling satisfies condition~\ref{lazyparity-one})
and~\ref{lazyparity-two}) of the definition of a lazy parity 
progress measure.

In order to obtain the B\"uchi automaton $\Bb_T$ from the safety
automaton $\Ss_T$, it suffices to appropriately assign priorities~$1$
and~$2$ to all states:
we let the state $\mathtt{reject}$ and all states $(q, t)$ such that
$t$ is a lazy node in tree~$T$ have priority~$1$, and we let all
states $(q, t)$ such that $t$ is not a lazy node in tree~$T$ have
priority~$2$.
% Let us now define the state priority function to complete the
% definition of $\Bb_T$.
% The state $\mathtt{reject}$ has priority $1$ and the state $(q,t)$ has  
% priority $2$ if and only if $t$ is not lazy.
Note that this ensures that a run of~$\Bb_T$ is accepting if and only
if the tree labelling of a run dag of~$\A$ that the underlying safety
automaton~$\Ss_T$ guesses---in the second component of its
states---satisfies condition~\ref{lazyparity-three}) of the definition
of a lazy parity progress measure.

We now argue that if $U$ is an $(n, d/2)$-universal tree then the
alternating B\"uchi automaton $\Bb_{\mathrm{lazi}(U)}$ is equivalent
to the alternating parity automaton~$\A$. 
Firstly, all words accepted by~$\Bb_T$ for any finite lazy ordered
tree~$T$ are also accepted by~$\A$.
This is because---as we have argued above---every accepting run dag of
any such automaton~$\Bb_T$ yields both a run dag of~$\A$
(in the first state components) and a lazy parity progress measure on
it (in the second state components), and the latter certifies that the
former is accepting.    

It remains to argue that every word accepted by~$\A$ is also accepted
by~$\Bb_{\mathrm{lazi}(U)}$.
By Theorem~\ref{theorem:lazy}, for every accepting run dag of~$\A$,
there is a lazy progress measure whose labels are nodes in a tree
$\mathrm{lazi}(T)$, where $T$ is an ordered tree of height at
most~$d/2$ and with at most~$n$ leaves.
It is routine to verify that if an ordered tree can be isomorphically
embedded in another, then the same holds for their lazifications.
By $(n, d/2)$-universality of~$U$, it follows that $\mathrm{lazi}(T)$
can be isomorphically embedded in $\mathrm{lazi}(U)$. 
Therefore, for every word on which there is an accepting run dag
of~$\A$, the automaton $\Bb_{\mathrm{lazi}(U)}$ has the capacity to
guess the run dag of~$\A$ and to guess and certify a lazy progress
measure on it.
%
% Observe that, given a $(n, h)$-universal tree $U$, for every
% ordered tree $T$ of height at most $h$ and with at most $n$ leaves, the 
% lazification $\mathrm{lazi}(T)$ of~$T$ can be isomorphically embedded
% into the lazification $\mathrm{lazi}(U)$ of~$U$. 
%
% By taking the lazification of a $(n,d/2)$-universal tree for~$T$, one 
% can now verify that the resulting automaton~$\Bb_T$, for every input
% word, guesses and verifies a lazy parity progress measure---if one
% exists---on a run dag of automaton~$\A$ on the input word,
% while~$\Bb_T$ rejects the input word otherwise.  
%
% Since the state-space of $\Bb_T$ consists of pairs in $Q\times T$ and
% $\mathtt{reject}$, and the number of transitions in $\Bb_T$ is also
% only larger than the number of transitions in $\A$ by a factor of
% $|T|$, the size of $\Bb_T$ is bi-linear in the size of $\A$
% and~$T$.
% By taking $T$ to be the succinct $(n, d/2)$-universal tree~\cite{JL17},
% we obtain a B\"uchi automaton $\Bb_T$ whose size is polynomial if
% $d = O(\log n)$, and of size $n^{\Theta(\lg(d/\lg n))}$ if $d =
% \omega(\log n)$.  
% This concludes our translation from alternating parity automata to
% alternating B\"uchi automata.
% Finally, to obtain an alternating weak
% automaton, it suffices to apply Kupferman and Vardi's translation from
% B\"uchi to weak~\cite{KV01}, of which the dual is described in Section
% \ref{section:from-co-buchi-to-weak}.
% Since the blow-up of this last step is quadratic, the composite
% complexity of our translation remains $n^{O(\lg(d/\lg n))}$.

In order to conclude the $O(ndL_U)$ upper bound on the number of 
states of $\Bb_{\mathrm{lazi}(U)}$, it suffices to observe that if the
number of leaves in an ordered tree~$T$ of height~$h$ is~$L$ then the
number of nodes in $\mathrm{lazi}(T)$ is $O(hL)$. 
\end{proof}

\section{Open questions}

% Our translation from alternating parity to weak automata nearly
% matches the current state-of-the-art upper bounds for solving parity 
% games~\cite{CJKLS17,JL17,FJSSW17}; 
% any further significant improvements would imply a breakthrough
% in the complexity of parity games.
Our use of universal trees to turn alternating parity automata into
B\"uchi automata, like Boker and Lehtinen's~\cite{BL18} register
technique, does not exploit alternations
(although the further B\"uchi to weak translation does):
all transitions that are not copied from the original automaton are
non-deterministic. 
Can universal and non-deterministic choices be combined to
further improve these translations?
Can the long-standing $\Omega(n\log n)$ lower bound~\cite{Mic88} 
be improved, for example by a combination of the full-automata
technique of Yan~\cite{Yan08} and the recent lower bound techniques
for non-deterministic safety separating automata based on universal 
trees~\cite{CDFJLP19} and universal graphs~\cite{CF19}? 
  
\bibliography{concur2019-14}

\end{document}